
\newif\ifpublic
\publictrue 


\documentclass[11pt]{article}

\ifpublic
\usepackage[disable]{todonotes}
\else
\usepackage[colorinlistoftodos]{todonotes}
\fi

\usepackage{fullpage}
\usepackage{amsmath,amsfonts,amsthm,xspace,graphicx,relsize,bm}
	\usepackage[bookmarks,colorlinks,breaklinks]{hyperref}  
	\hypersetup{linkcolor=blue,citecolor=blue,filecolor=blue,urlcolor=blue} 
\usepackage{mathrsfs}
\usepackage{mathpazo}
\usepackage[margin=1in]{geometry}
\usepackage{float}
\usepackage{endnotes}
\usepackage{enumitem}
\usepackage{color}
\usepackage{amssymb,latexsym}
\usepackage{multicol}
\usepackage{pdfpages}
\usepackage[capitalize]{cleveref}

\setlength{\parindent}{1.5em}
\setlength{\parskip}{3pt}

\newtheorem{theorem}{Theorem}
\newtheorem{proposition}[theorem]{Proposition}
\newtheorem{lemma}[theorem]{Lemma}
\newtheorem{claim}[theorem]{Claim}

\newtheorem{corollary}[theorem]{Corollary}
\newtheorem{definition}[theorem]{Definition}
\theoremstyle{definition}

\newtheorem{remark}[theorem]{Remark}

\newtheorem{conjecture}[theorem]{Conjecture}


\DeclareMathOperator*{\Ex}{\mathbb{E}}

\newcommand{\N}{\ensuremath{\mathbb{N}}}

\newcommand{\R}{\ensuremath{\mathbb{R}}}
\newcommand{\Z}{\ensuremath{\mathbb{Z}}}

\renewcommand{\P}{\mathsf{P}}
\newcommand{\Q}{\mathsf{Q}}

\newcommand{\eps}{\varepsilon}

\newcommand{\Sbar}{\overline{S}}


\newcommand{\xhat}{\what{x}}

\newcommand{\mt}{{-t}}

\newcommand{\X}{\mathscr{X}}
\newcommand{\Y}{\mathcal{Y}}
\newcommand{\A}{\mathscr{A}}

\newcommand{\mi}{{-i}}
\newcommand{\mP}{\mathcal{P}}

\newcommand{\wt}[1]{\widetilde{#1}}
\newcommand{\what}[1]{\widehat{#1}}

\newcommand{\val}{\mathrm{val}}

\newcommand{\Xvec}{X_{[n]}}

\newcommand{\footnoteremember}[2]{
\footnote{#2}
\newcounter{#1}
\setcounter{#1}{\value{footnote}}
}
\newcommand{\footnoterecall}[1]{
\footnotemark[\value{#1}]
}

\begin{document}

\title{Multiplayer parallel repetition for expanding games}

\author{Irit Dinur\footnoteremember{fn:funding}{Research of the first and second author supported in part by an ISF-UGC grant number 1399/14.}\thanks{Weizmann Institute of Science. email: {\tt irit.dinur@weizmann.ac.il}} \and Prahladh Harsha\footnoterecall{fn:funding} \thanks{Tata Institute of Fundamental Research. email: {\tt prahladh@tifr.res.in, rakesh09@gmail.com}} \and Rakesh Venkat\footnotemark[\value{footnote}]\and Henry Yuen\thanks{UC Berkeley. email: {\tt hyuen@cs.berkeley.edu}}}
\maketitle

\begin{abstract}
	We investigate the value of parallel repetition of one-round games with any number of players $k\ge 2$. It has been an open question whether an analogue of Raz's Parallel Repetition Theorem holds for games with more than two players, i.e., whether the value of the repeated game decays exponentially with the number of repetitions. Verbitsky has shown, via a reduction to the density Hales-Jewett theorem, that the value of the repeated game must approach zero, as the number of repetitions increases. However, the rate of decay obtained in this way is extremely slow, and it is an open question whether the true rate is exponential as is the case for all two-player games.
	
	Exponential decay bounds are known for several special cases of multi-player games, e.g., free games and anchored games. In this work, we identify a certain expansion property of the base game and show all games with this property satisfy an exponential decay parallel repetition bound. Free games and anchored games satisfy this expansion property, and thus our parallel repetition theorem reproduces all earlier exponential-decay bounds for multiplayer games. More generally, our parallel repetition bound applies to all multiplayer games that are \emph{connected} in a certain sense.
	
	We also describe a very simple game, called the GHZ game, that does \emph{not} satisfy this connectivity property, and for which we do not know an exponential decay bound. We suspect that progress on bounding the value of this the parallel repetition of the GHZ game will lead to further progress on the general question.
\end{abstract}

\section{Introduction and Results}


We consider multi-player one-round games, and their parallel repetition. In a $k$-player game $G$, a referee chooses a $k$-tuple of questions $(x^1,\ldots,x^k)$ from some question distribution $\mu$, and sends $x^t$ to player $t$. Each player $t$ gives an answer $a^t$ that only depends on their question (i.e., they cannot communicate with each other). The referee evaluates the players' questions and answers according to some predicate $V((x^1,\ldots,x^k),(a^1,\ldots,a^k))$, and the players win if this predicate evaluates to $1$. The \emph{value} of a game $G$, denoted by $\val(G)$, is the players' maximum success probability over all possible strategies they may use.

Here is a very natural operation on games, called \emph{parallel repetition}: starting with a $k$-player game $G$, we can construct a new $k$-player game $G^{\otimes n}$, called the $n$-fold parallel repetition of $G$. In $G^{\otimes n}$, the referee will select $n$ independent question tuples $(x^1_i,\ldots,x^t_i)$ from $\mu$ for each \emph{coordinate} $i = 1,\ldots,n$, and send $(x^t_1,\ldots,x^t_n)$ to each player $t$. Each player has to respond with $n$ answers, and they win this repeated game if their answers and questions for each coordinate $i$ satisfy the original game predicate $V$. We call $G$ the \emph{base game} of the parallel repeated game $G^{\otimes n}$.

The central question we consider is how $\val(G^{\otimes n})$ depends on the base game $G$ and the number of repetitions $n$. When $G$ is a two-player game, the behavior of $\val(G^{\otimes n})$ has been extensively studied, especially due to its applications to probabilistically checkable proofs and hardness of approximation. The central result in this area is Raz's Parallel Repetition Theorem~\cite{Raz1998} (coupled with subsequent improvements due to Holenstein~\cite{Holenstein2009}), which states the following:

\begin{theorem}[Two-player parallel repetition]
Let $G$ be a one-round two-player game with $\val(G) \leq 1 - \eps$  for some $\eps \in (0,1)$. Then for all $n \geq 0$,
$$
	\val(G^{\otimes n}) \leq \exp \left (- \frac{c \eps^3 n}{\log |\A|} \right)
$$
where $|\A|$ is the answer alphabet size of $G$ and $c> 0$ is a universal constant.
\end{theorem}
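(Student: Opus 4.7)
The plan is to follow the standard information-theoretic framework of Raz~\cite{Raz1998} as simplified by Holenstein~\cite{Holenstein2009}. Assume for contradiction that $\val(G^{\otimes n}) \geq 2^{-c\eps^3 n / \log|\A|}$ for a suitable constant $c$, and fix a strategy $(f,g)$ achieving this value. Let $X_i,Y_i$ denote the question random variables and $A_i,B_i = f(X_{[n]})_i, g(Y_{[n]})_i$ the answers in coordinate $i$. Let $W_i$ be the event that the players win in coordinate $i$, and for a set $T\subseteq [n]$ let $W_T = \bigwedge_{i\in T} W_i$. The central step is to prove a \emph{round elimination} claim: there exist $m = \Theta(\eps n /\log|\A|)$ coordinates $i_1,\dots,i_m\notin T$ and a set $T$ with $|T|\le m$ such that, conditioned on $W_T$, the average probability of winning in one of the $i_j$'s is at most $1-\eps/2$. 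Telescoping this claim from $T=\emptyset$ upwards will contradict the starting assumption, since $\Pr[W_{[m]}]$ would end up at most $(1-\eps/2)^m$, which is smaller than the assumed value.

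To prove the round elimination claim, I would condition on $W_T$ and use the chain rule for mutual information. Setting $D_i$ to be the ``dependency-breaking'' variable that turns $\mu^{\otimes n}$ into a product distribution over players (e.g. a uniformly-random bit per coordinate telling which of the two players' questions to reveal, together with that question), one shows that for most coordinates $i$,
\begin{equation*}
I(X_i ; D_{-i}\mid W_T)\; +\; I(Y_i ; D_{-i} \mid W_T) \;\le\; O\!\left(\frac{\log(1/\Pr[W_T]) + |T|\log|\A|}{n - |T|}\right).
\end{equation*}
Choosing $|T|$ of order $\eps n/\log|\A|$, this average bound is $O(\eps^2)$, so by Markov's inequality we can find a good coordinate $i$ where both sides are $O(\eps^2)$, and moreover where the marginal of $(X_i,Y_i)\mid W_T$ is $O(\eps)$-close in statistical distance to $\mu$ (via Pinsker's inequality).

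The heart of the argument is then a \emph{correlated sampling} lemma: given shared randomness, two players each holding one of two distributions $P,Q$ with $\|P-Q\|_1 \le \delta$ can, using only their private inputs and public randomness, sample a common $z$ whose marginal is $\delta$-close to $P$ on Alice's side and to $Q$ on Bob's side. Applying this to sample $D_{-i}$ from $D_{-i}\mid X_i,W_T$ and $D_{-i}\mid Y_i,W_T$ (which are close by the mutual information bound above), the two players can simulate the conditional strategy on coordinate $i$ in the base game $G$, winning with probability at least $\Pr[W_i\mid W_T] - O(\eps)$. If this were larger than $1-\eps/2$ for every choice of $i$, we would violate $\val(G)\le 1-\eps$; hence we find our desired coordinate to add to $T$.

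The main obstacle is the information-theoretic chain-rule computation under the non-product conditioning by $W_T$: one must carefully argue that the dependency-breaking variables $D_{-i}$ allow ``lifting'' a convex-combination strategy for $G$ back to a single-coordinate strategy, and that the cost incurred by conditioning on $W_T$ (which is exponentially small) is absorbed by the $\log(1/\Pr[W_T])$ term in the information bound. The $\eps^3$ exponent arises by balancing $|T|\cdot\log|\A|$ against $\eps n$ and squaring the statistical-distance loss via Pinsker; a more delicate accounting (as in Holenstein) would be needed to obtain the sharp constants, but not to obtain the claimed $\exp(-c\eps^3 n/\log|\A|)$ bound.
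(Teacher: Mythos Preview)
The paper does not give its own proof of this theorem: it is quoted as the Raz--Holenstein result and serves only as background for the paper's multiplayer theorem. Your outline is a correct sketch of exactly that standard proof (dependency-breaking variable, chain-rule bound on $I(X_i;D_{-i}\mid W_T)$ and $I(Y_i;D_{-i}\mid W_T)$, Pinsker, then Holenstein's correlated sampling to build the single-shot strategy), so there is nothing in the paper to compare it against.

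One point worth flagging, since it is the paper's actual contribution: the proof of the paper's main theorem for $k$-player expanding games follows the same skeleton you describe \emph{except} that it deliberately avoids the correlated-sampling step, which does not extend to $k\ge 3$ players. Instead, the paper shows (via the spectral gap of the $(k-1)$-connection graph and a Hellinger-distance argument) that the conditional distributions $\P_{R_{-i}\mid x_i,W_S}$ are all close to a single \emph{global} distribution $\wt{\P}_{R_{-i}}$ independent of every player's question, which can then be sampled from shared randomness alone. Specialized to $k=2$ this would give an alternative proof of the two-player theorem, but with exponent $\eps^5$ rather than $\eps^3$, because of the extra square root paid when passing through Hellinger distance and the expansion inequality~\eqref{eq:expander}.
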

\noindent In other words, for \emph{nontrivial} two-player games $G$ (i.e., games whose value is less than $1$), $\val(G^{\otimes n})$ decays exponentially fast in $n$.

What about parallel repetition for games involving more than two players? It remains an intriguing open question  whether Raz's Theorem can be extended to the multiplayer case. An early result of Verbitsky~\cite{Verbitsky1996} shows that for multiplayer games $G$ with $\val(G) < 1$, the value of the repeated game $G^{\otimes n}$ must decay to $0$ as $n$ goes to infinity. However, the bound on the rate of decay is extremely weak: his result only shows that $\val(G^{\otimes n})$
is bounded by a function that is inversely proportional to the inverse Ackermann function of $n$~\cite{Polymath2012}! This poor rate of decay comes from its black-box usage of the density Hales-Jewett theorem from extremal combinatorics. 

So far, Verbitsky's theorem is still the only result that gives a general parallel repetition bound for all multiplayer games. Exponential-decay parallel repetition bounds (\`{a} la Raz) for multiplayer games have been proven when there are additional assumptions on the game; for example, it has long been a folklore result that multiplayer \emph{free} games satisfy an exponential-decay parallel repetition theorem~\cite{ChungWY2015}.\footnote{A free game is one where each players' question is independent of all the other players'.} Recently, Bavarian, Vidick and Yuen~\cite{BavarianVY2015} studied a variant of parallel repetition (called ``anchoring'') where the base game $G$ is first modified to an equivalent game $\widetilde{G}$ before being repeated in parallel, producing $\widetilde{G}^{\otimes n}$. They proved that the value of $\widetilde{G}^{\otimes n}$ is exponentially small in $n$ when $\val(G) < 1$, and otherwise $\val(\widetilde{G}^{\otimes n}) = 1$.\footnote{In fact, Bavarian, Vidick and Yuen were motivated by the question of parallel repetition for \emph{quantum} players; they showed that so-called ``anchored''  games satisfy quantum parallel repetition theorems.} Buhrman et al.~\cite{BuhrmanFS2014} show that the \emph{non-signalling} value (which upper bounds $\val(G)$) of a multiplayer game decays exponentially under parallel repetition if the game has \emph{full support}, that is, all possible question tuples to the players occur with non-zero probability. However, this does not necessarily imply a non-trivial decay of the value in general for such games, since there are games $G$ with non-signalling value $1$, but $\val(G)<1$.

We observe that the class of multiplayer games for which we have exponential-decay parallel repetition bounds (or, for that matter, any rate of decay better than inverse Ackermann!) all share a particular feature in common: when viewed as hypergraphs, the games all possess a certain ``well-connectedness'' property. For example, consider any two question tuples $x,\what{x}$ in the support of the question distribution $\mu$ of a free game. The question tuple $x = (x^1,\ldots,x^k)$ can be ``locally morphed'' to $\what{x} = (\what{x}^1,\ldots,\what{x}^k)$ via a sequence of question tuples $(\what{x}^1,\ldots,\what{x}^j,x^{j+1},\ldots,x^k)$ for $j = 1,\ldots,k$, each of which remain in the support of $\mu$. Furthermore, the anchoring transformations of~\cite{BavarianVY2015} can be understood as improving the connectivity properties of the base game before repetition. In this paper, we formalize this well-connectedness property as a type of \emph{expansion} of the base game, and show that any connected multiplayer game has exponential-decay parallel repetition bounds. We associate with every base game $G$, a related graph $H_G$ (see Definition~\ref{def:oneconnected}) and show that if $H_G$ is connected, then the value of the repeated game, $\val(G^{\otimes n})$, goes down exponentially in $n$, more precisely, for sufficiently large $n$, $$\val(G^{\otimes n}) \leq \exp\left(-\frac{c \eps^5 \lambda^2 n}{\log |\A|}\right),$$ where $\lambda$ is the spectral gap of the Laplacian of the graph $H_G$ and $c$ is some universal constant (see Theorem~\ref{thm:expanding-main} for an exact statement of the result). Thus, if the graph $H_G$ is connected (i.e., $\lambda > 0$), we have an exponential decay in $n$. In the case of games $G$, wherein the associated graph $H_G$ is not only connected but also expanding (i.e., $\lambda$ is a constant), as is the case with free games, and anchored games, the rate of exponential decay is a function of alphabet size $|\A|$ of the base game $G$ as in Raz's theorem.  The class of \emph{full-support} games investigated in Buhrman et al.~\cite{BuhrmanFS2014} have a connected $H_G$, and hence our result shows an exponential decay bound on the value of such games under parallel repetition.

\paragraph*{Why care about games with more than two players?} The notion of a game is an extremely basic notion, and it's use is pervasive in communication complexity, probabilistically checkable proofs (PCPs), etc. Whereas two-player games are already quite powerful and give us a lot, many problems are inherently higher-dimensional, i.e., would more naturally be cast as games with more than two players. The reason this is not commonly done is because we don't know how to analyze these creatures.
For example, constraint-satisfaction-problems with arity $k$ are naturally cast as a $k$-player game. They can be reduced to a two-player game in the same way that a hypergraph can be converted to a graph, but this reduction in dimensionality might be lossy.
Indeed, it is empirically true that PCPs with 3 or more queries are much more powerful than 2-query PCPs, but what is the reason for this?

Furthermore, this sudden jump in difficulty in going from two-player problems to three or more players is encountered also when studying multiparty communication complexity, and seemingly because of the same technique limitations. While direct sum and direct product theorems are known for two-party communication complexity, nothing is known for the multiparty setting (in the so-called \emph{number-on-forehead} model), and in fact making progress on this is connected to hard problems in circuit complexity.

We feel that the study of games with three or more players is a very important component in understanding such questions.

\subsection{Notation}

We establish some notational conventions, before stating our results formally.

For a $k$-player game $G$, we will let $\X^t$ denote the question alphabet for player $t$, and $\X = \X^1 \times  \X^2 \times \cdots \times \X^k$ is the question alphabet for all the players together, underlying the question distribution $\mu$. We will let $\A^t$ denote the answer alphabet for player $t$, and $\A = \A^1 \times \cdots \times \A^k$ to denote the answer alphabet for all the players together.

 We will use superscripts to denote the players, and subscripts to denote the coordinate in parallel repetition. For example, $x^t_i$ denotes the question received by player $t$ in coordinate $i$.
 A single variable $x$ can denote questions to the players in some coordinate clear from context, or a single coordinate game. We use $x^{\mt}$ to denote the questions to all but the $t$-th player in a single coordinate. When talking about multiple coordinates, we will use subscripts: $x_{\mi}$ denotes the questions to players in all but the $i$'th coordinate, and $x_\mi^t$ denotes the all questions to player $t$ in the repeated game except for the $i$'th coordinate. To denote the question to player $t$ in all coordinates, we use $x^t_{[n]}$.

We largely adopt the notational conventions from~\cite{Holenstein2009} for probability distributions. We let capital letters denote random variables and lower case letters denote specific samples. We use $\P_X$ to denote the probability distribution of random variable $X$, and $\P_X(x)$ to denote the probability that $X = x$ for some value $x$. For multiple random variables, e.g., $X, Y, Z$, $\P_{XYZ}(x,y,z)$  denotes their joint distribution with respect to some probability space understood from context.

We use $\P_{Y | X = x}(y)$ to denote the conditional distribution $\P_{YX}(y,x)/\P_X(x)$, which is defined when $\P_X(x) > 0$. When conditioning on many variables, we usually use the shorthand $\P_{X | y,z}$ to denote the distribution $\P_{X | Y =y,Z=z}$. For an event $W$ we let $\P_{X Y | W}$ denote the distribution conditioned on $W$. We use the notation $\Ex_{X} f(x)$ and $\Ex_{\P_X} f(x)$ to denote the expectation $\sum_{x} \P_X(x) f(x)$.

Let $\P_{X_0}$ be a distribution over $\X$ and $\P_{X_1,Y}$ a joint distribution over $\X \times \Y$. Suppose for every $x$ in the support of $\P_{X_0}$, the conditional distribution $\P_{Y | X_1 = x}$ defined over $\Y$ is well-defined. We then define the distribution $\P_{X_0} \P_{Y | X_1}$ over $\X \times \Y$ as
$$
	(\P_{X_0} \P_{Y | X_1})(x,y) \,:=\, \P_{X_0}(x) \cdot \P_{Y | X_1 = x}(y).
$$

For two random variables $X_0$ and $X_1$ over the same set $\X$, we use
$$\| \P_{X_0} - \P_{X_1} \| \,:=\, \frac{1}{2}\sum_{x \in \X} |\P_{X_0}(x) - \P_{X_1} (x)|,$$
to denote the total variation distance between $\P_{X_0}$ and $\P_{X_1}$.

\subsection{Our results}

To define our class of connected and expanding games, we need the following notion of the \emph{$(k-1)$-connection graph} of a game $G$. This graph, denoted $H_G$, has a vertex for every $k$-tuple of questions, and two $k$-tuples are connected by an edge if they agree on $(k-1)$ coordinates. A game is $(k-1)$-connected iff $H_G$ is connected.

To further define our notion of expansion for a $k$-player game we need to take the weights of $G$ into account when defining $H_G$.  For this it is instructive to think of an intermediate bipartite graph $B_G = (\X',\X,E)$ as follows. The right hand vertices is simply $\X$, the set of all $k$-tuples of questions, and we endow these vertices with weights as given by $G$. The left hand vertices consists of all punctured $k$-tuples, which are $k$-tuples of questions where exactly one of the entries is replaced by a special $\star$ symbol. Connect each $k$-tuple of questions to all of the $k$ ways to make it into a punctured $k$-tuple. Now, consider the distribution on punctured tuples obtained by selecting a random $k$-tuple from $\X$ according to the game distribution, and then puncturing it in a random location. The graph $H_G$ is defined by selecting a random punctured tuple according to this distribution, and then selecting independently two $k$-tuples conditioned on this puncturing. Note that each completion is distributed exactly according to the original game distribution.

We now move to a completely explicit description consistent with the above. In what follows, $\P_X(x)$ denotes the probability of question tuple $x$ under the question distribution $\mu$, $\P_{X^t}(x^t)$ denotes the marginal probability of player $t$'s question, and $P_{X^t | X^{-t} = x^{-t}}(x^t)$ denotes the same probability, conditioned on the other players having received $x^{-t}$. 

\begin{definition}[$(k-1)$-connection graph of $G$]\label{def:oneconnected}
Let $G = (\mu,V)$ be a $k$-player game with question set $\X = \X^1 \times \cdots \X^k$. The \textbf{$(k-1)$-connection graph of $G$} is the weighted graph $H_G = (V_H, \rho)$ with vertex set $V_H = \X$ and weight function $\rho: \X \times \X \to [0,1]$, defined as follows: for every $x,x' \in \X$,
\begin{align*}
	\rho(x, x') = \left \{ \begin{array}{ll}
		 \frac{1}{k}~ \P_{X}(x) \left [ \sum_{t\in[k]} \P_{X^t | x^{\mt}}(x^t)\right ] 	& \mbox{ if } x = x',\\
	\frac{1}{k}~ \P_{X^\mt}(x^\mt) \cdot \P_{X^t|x^{\mt}} (x^{t})\cdot \P_{X^t|x^{\mt}} (x'^t) &  \mbox{ if $\exists \, t$ s.t. } x^{\mt} = x'^{\mt} \mbox{, } x^t \neq x'^t ,\\
		0 & \mbox{ otherwise. }
	\end{array}
\right.
\end{align*}
\end{definition}
\noindent The weight function $\rho(x,x')$ can be viewed as the probability of generating the pair $(x,x')$ according to  the following random process: first, $x \in \X$ is sampled from the distribution $\P_X$. Then, a coordinate $t \in [k]$ is chosen uniformly at random, and $x'$ is sampled from the conditional distribution $\P_{X | x^\mt}$ (that is, the distribution $\mu$ conditioned on $x^\mt$).

Observe that $\rho$ is symmetric, i.e., $\rho(x,x') = \rho(x',x)$. Furthermore, note that the weight on any given vertex is exactly:
\begin{align*}
	\rho(x,\cdot) &= \sum_{x'} \rho(x,x') = \\
	&= \P_{X}(x) \cdot \frac{1}{k}\sum_{t \in [k]} \P_{X^t |x^{\mt}} (x^t) +\P_{X}(x)\cdot \frac{1}{k}\sum_{t\in[k]} \sum_{x'^t \neq x^t}\P_{X^t |x^\mt} (x'^t)\\
	&= \P_{X}(x).
\end{align*}
Therefore $\rho(\cdot,\cdot)$ is a probability distribution over $\X \times \X$.

\begin{remark}
Henceforth, when we talk about graph properties such as diameter, connectedness or expansion of $H_G$, we will do so only with respect to the vertices having non-zero weight.
\end{remark}

\medskip

\noindent We now recall the definition of a graph with a weight function $\rho$ being a spectral expander:

\begin{definition}[Normalized Laplacian]
 Let $H$ be a weighted graph where $\rho(u,v) \leq 1$ is the weight between vertices $u$ and $v$.  The normalized Laplacian $L_H\in \R ^{|V| \times |V|}$ of $H$ is defined to be
\begin{align*}
	(L_H)_{u,v} = \left \{ \begin{array}{ll}
		1 - \frac{\rho(u,v)}{\rho(v)} 	& \mbox{ if } u = v \mbox { and } \rho(v) \neq 0 \\
		- \frac{\rho(u,v)}{\sqrt{\rho(u)\rho(v)}} & \mbox { if } \rho(u),\rho(v) \neq 0 \\
		0 & \mbox{ otherwise }
	\end{array}
	\right.
\end{align*}
where $\rho(u) = \sum_v \rho(u,v)$ and $\rho(v) = \sum_u \rho(u,v)$.
\end{definition}
It is well-known that the second smallest eigenvalue of $H$ is given by the following variational formula: for all $r \in \N$,
\begin{align}
	\lambda(H) = \inf_{g} \frac{\sum_{u,v} \rho(u,v) \| g(u) - g(v) \|^2}{\sum_{u} \rho(u) \| g(u) - \overline{g} \|^2}
	\label{eq:expander}
\end{align}
where the infimum is over all vector-valued functions $g: V(H) \to \R^{r}$ defined on the vertices of $H$, and $\overline{g}$ is a vector in $\R^r$ where for each $i \in [r]$, $\overline{g}_i = \sum_u \rho(u) g(u)_i$.

\begin{definition}[Expander graph]
Let $\lambda \in (0,1)$. A graph $H$ is a $\lambda$-expander if $\lambda(H) \geq \lambda$.
\end{definition}


Our main result is an exponential-decay parallel repetition bound for multiplayer games whose $(k-1)$-connection graph is expanding:

\begin{theorem}[Main theorem] \label{thm:expanding-main}
Let $\epsilon, \lambda \in (0,1)$. Let $G$ be a $k$-player game  with $\val(G) \leq 1-\eps$. If the $(k-1)$-connection graph $H_G$ is a $\lambda$-expander, then we have, for all $n \geq \frac{\log 4/\eps}{\eps^5 \lambda^2 }$:
$$
\val(G^{\otimes n}) \leq \exp\left(-\frac{c \eps^5 \lambda^2 n}{\log |\A|}\right)
$$
where $\A = \A^1 \times \cdots \times \A^k$ is the answer alphabet in $G$, and $c$ is a universal constant.
\end{theorem}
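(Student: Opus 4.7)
The plan is to follow the information-theoretic framework of Raz and Holenstein, adapted to the multiplayer setting, with the expansion of $H_G$ taking on the role played by independence of the two players' questions in the classical argument. Suppose for contradiction that some strategy achieves $\val(G^{\otimes n}) \geq \exp(-c \eps^5 \lambda^2 n / \log|\A|)$, and let $W_C$ denote the event that the players win on every coordinate in $C \subseteq [n]$. A standard chain-rule argument applied to $\log(1/\Pr[W_{[n]}])$ produces a set $C$ of size roughly $\delta n / \eps$ and a coordinate $i \notin C$ together with a dependency-breaking random variable $D$ (questions at coordinates in $C$ plus suitable per-coordinate side information outside $C$) such that, with constant probability over $D$: (i) $\Pr[W_i \mid W_C, D] \geq 1 - \eps/2$, and (ii) the average conditional mutual information $\tfrac{1}{k} \sum_t I(X^t_i ; X^{\mt}_i \mid D, W_C)$ is small.

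The new ingredient is to use the expansion of $H_G$ to bound the distance between the induced distribution $\nu_D := \P_{X_i \mid W_C, D}$ and the game distribution $\P_X$. Setting $g(x) := \nu_D(x) / \P_X(x) - 1$ on vertices of $H_G$ and invoking the variational formula~\eqref{eq:expander} yields $\chi^2(\nu_D, \P_X) \leq \lambda^{-1} \sum_{x,x'} \rho(x,x') (g(x)-g(x'))^2$. By the construction of $H_G$, the Dirichlet form on the right decomposes as $\tfrac{1}{k} \sum_t \Ex_{X^{\mt}}[\mathrm{Var}(g \mid X^{\mt})]$, where the variance is taken under $\P_{X^t \mid X^{\mt}}$, and this quantity is controlled, via a Pinsker-type inequality, by the per-player conditional mutual information terms from (ii). Averaging over $D$ and combining with (ii) gives $\Ex_D \|\nu_D - \P_X\|_{\mathrm{TV}} = O(\sqrt{\delta/\lambda})$. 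Crucially, this is where connectedness ($\lambda > 0$) is used: without it, the conditional distribution $\nu_D$ could be pinned to a non-trivial subdistribution of $\P_X$ without any per-player mutual information paying for the deviation.

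With this closeness in hand, the reduction to $G$ is essentially standard: the players use shared randomness to sample $(D, W_C)$, each player $t$ receives $x^t$ from the marginal $\P_{X^t}$, samples the remaining $n-1$ coordinates from $\P_{X^t_{\mi} \mid D, W_C, X^t_i = x^t}$ by correlated sampling, and then plays the original $G^{\otimes n}$ strategy at coordinate $i$. The closeness $\nu_D \approx \P_X$ established above ensures that the players' sampled questions at coordinate $i$ form a consistent draw from $\P_X$ up to $o(\eps)$ error, so they win in $G$ with probability $> 1 - \eps$, contradicting $\val(G) \leq 1 - \eps$. The main technical obstacle, and the apparent source of the $\lambda^2$ factor in the final bound, is the second paragraph: one has to align the sum over $t$ of per-player conditional mutual information produced by the chain rule with the Dirichlet form on $H_G$ and then invert the expansion to obtain the $\chi^2$ bound, and a naive execution appears to lose a factor of $\lambda^{-1}$ twice---once in the Pinsker step and once in inverting the variational inequality---while all other steps are lifted essentially verbatim from the two-player analysis.
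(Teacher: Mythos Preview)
Your high-level architecture is right---use the chain rule to produce a coordinate $i$ and a dependency-breaking variable, then use expansion of $H_G$ to show that a \emph{global} distribution on that variable works for all players---but the execution misfires at the central step.

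The quantity you claim in (ii), $\tfrac{1}{k}\sum_t I(X^t_i; X^{\mt}_i \mid D, W_C)$, is \emph{not} what any Raz--Holenstein chain-rule argument produces, and in fact it is not small: if the reduction succeeds then $\P_{X_i \mid D, W_C}$ is close to the game distribution $\P_X$, so this average is essentially $\tfrac{1}{k}\sum_t I(X^t;X^{\mt})$ under $\P_X$, a $\Theta(1)$ quantity for any nontrivially correlated game. What the two-player analysis (applied for each $t$ with the other $k-1$ players merged) actually controls is, for neighboring $x_i,\what{x}_i$ in $H_G$,
\[
\Ex_i \sum_{x_i,\what{x}_i} \rho(x_i,\what{x}_i)\,\bigl\| \P_{R_{\mi}\mid x_i, W_S} - \P_{R_{\mi}\mid \what{x}_i, W_S}\bigr\|_1 \;\le\; O(\sqrt{\delta}),
\]
i.e., the effect on the dependency-breaking variable of flipping one player's question. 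Accordingly, the right object to feed into the variational inequality~\eqref{eq:expander} is the \emph{vector-valued} map $x \mapsto g_i(x) := \sqrt{\P_{R_{\mi}\mid x, W_S}}$, whose Dirichlet form is a sum of squared Hellinger distances and is therefore bounded by the displayed TV bound above. Your scalar choice $g(x)=\nu_D(x)/\P_X(x)-1$ does not have a Dirichlet form controlled by any available information quantity (and certainly not by the one you named); the ``Pinsker-type'' step you invoke does not go through.

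Two further points. First, once expansion gives $\|g_i(x)-\overline{g}_i\|$ small on average, the conclusion is that all the $\P_{R_{\mi}\mid x,W_S}$ are close to a single $\wt{\P}_{R_{\mi}}$; the players then sample $r_{\mi}$ from this global distribution using \emph{shared randomness only}. Correlated sampling is precisely what this argument avoids---there is no per-player target distribution here, and invoking correlated sampling in your final step is a misreading of the obstacle. Second, the $\lambda^2$ in the exponent does not come from ``losing $\lambda^{-1}$ twice.'' A single application of the expansion inequality gives $\Ex \|g_i(x)-\overline{g}_i\|^2 \le O(\sqrt{\delta}/\lambda)$; passing from squared Hellinger to TV takes a square root, giving $O(\delta^{1/4}/\sqrt{\lambda})$; setting this $\approx \eps$ yields $\delta = \Omega(\eps^4\lambda^2)$, whence the $\lambda^2$ after unwinding $\delta$.
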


By applying our main theorem to free games and anchored games, we recover existing exponential-decay parallel repetition results for multiplayer games~\cite{ChungWY2015,BavarianVY2015}. We also get an exponential-decay lower bound for \emph{connected games} -- games whose $(k-1)$-connection graph is connected. We record these consequences in the following corollary:

\begin{corollary}\label{cor}
	Let $G$ be a $k$-player game with $\val(G) \leq 1 - \eps$, and let $n \geq \frac{\log 4/\eps}{\eps^5 \lambda^2 }$. If $G$ is:
	\begin{enumerate}
		\item Free, i.e., $\mu(x) = \mu^1(x^1) \times \cdots \times \mu^k(x^k)$, then
		$$
	\val(G^{\otimes n}) \leq \exp \left(-\frac{c \, \eps^5 \, n}{k^2 \log |\A|}\right).
		$$
		\item $\alpha$-Anchored (see Definition~\ref{def:anchored-game}, and \cite{BavarianVY2015}  for more details), then
		$$ \val(G^{\otimes n}) \leq \exp \left(-\frac{c \, \alpha^{2k} \,\eps^5 \,  n}{64 ~k^2\log |\A|}\right). $$
		\item Connected, i.e., the $(k-1)$-connection graph is connected, then
		$$
		\val(G^{\otimes n}) \leq \exp \left(-\frac{c \,\rho_{min}^2 \,  \eps^5 \,n}{\log |\A|}\right)
		$$
		where $ \rho_{min} = \min_{u,v : \rho(u,v) > 0} \rho(u,v)$. In particular, if the game $G$ is such that $\mu$ is the uniform distribution over some set $S \subseteq \X$, then $\rho_{min} \geq (k|S|^2 )^{-1}$.
	\end{enumerate}
	where $c$ is a universal constant.
\end{corollary}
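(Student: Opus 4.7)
The corollary will be proved by specializing Theorem~\ref{thm:expanding-main} to each of the three families of games. The only task is to lower bound the Laplacian spectral gap $\lambda(H_G)$ of the $(k-1)$-connection graph for each case, since everything else flows directly from the main theorem with $\lambda$ substituted in. The plan is to handle the three parts separately, in increasing order of subtlety.

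For part 1 (free games), I would observe that when $\mu$ is a product distribution, $\P_{X^t\mid x^{\mt}} = \mu^t$ is independent of $x^{\mt}$, so the random walk induced by $\rho$ is exactly the Glauber dynamics for the product $\mu^1\times\cdots\times\mu^k$: pick a uniformly random coordinate $t\in[k]$ and resample $x^t$ from $\mu^t$ independently. A standard decomposition argument (writing $g(x)-\bar g$ as a telescoping sum of martingale increments $\Ex[g\mid X^{\leq s}]-\Ex[g\mid X^{<s}]$ and applying Jensen's inequality in the variational formula \eqref{eq:expander}) gives $\lambda(H_G) \geq 1/k$. Plugging this into Theorem~\ref{thm:expanding-main} produces the claimed $\exp(-c\eps^5 n / (k^2\log|\A|))$ bound.

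For part 2 (anchored games), recall that the $\alpha$-anchored transformation independently replaces each player's question $x^t$ by a fresh symbol $\star$ with probability $\alpha$. Hence $\P_{X^t\mid x^{\mt}}(\star) \geq \alpha$ for every fixing of $x^{\mt}$, and in particular the totally-anchored tuple $x^\star = (\star,\ldots,\star)$ has mass at least $\alpha^k$. I would use this to build canonical paths in $H_G$: from any $x$, the path $x \to (\star,x^2,\ldots,x^k) \to (\star,\star,x^3,\ldots,x^k) \to \cdots \to x^\star$ has length at most $k$ and every edge on it carries weight $\Omega(\alpha^k/k)$ (after multiplying through the $\P_X$ and $\P_{X^t\mid x^{\mt}}$ factors in Definition~\ref{def:oneconnected}). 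A Diaconis--Stroock style canonical-paths bound then yields $\lambda(H_G) = \Omega(\alpha^k/k)$, and the $\lambda^2 = \Omega(\alpha^{2k}/k^2)$ factor in Theorem~\ref{thm:expanding-main} produces the stated exponent.

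For part 3 (connected games), I would argue that for any connected weighted graph $H_G$ with minimum nonzero edge weight $\rho_{min}$, one has $\lambda(H_G) \geq \rho_{min}$ up to factors that can be absorbed into the universal constant $c$, via the conductance inequality applied to the connected chain with self-loops. The computation $\rho_{min} \geq (k|S|^2)^{-1}$ for uniform $\mu$ on $S$ is then a direct substitution into Definition~\ref{def:oneconnected}: for $x,x'$ agreeing on $x^{\mt}$ with $x^t\neq x'^t$ both in $S_{x^{\mt}} := \{y^t : (y^t,x^{\mt})\in S\}$, we compute $\rho(x,x') = \tfrac{1}{k}\cdot \tfrac{|S_{x^{\mt}}|}{|S|}\cdot \tfrac{1}{|S_{x^{\mt}}|^2} = \tfrac{1}{k|S||S_{x^{\mt}}|} \geq \tfrac{1}{k|S|^2}$. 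The main obstacle here is actually part 3: turning the qualitative assumption ``connected'' into a quantitative spectral gap bound purely in terms of $\rho_{min}$ requires being careful about diameter dependence, and the cleanest route is to realize that the conductance of any cut is at least $\rho_{min}$ in the connected case and apply Cheeger's inequality, losing only a constant in the exponent.
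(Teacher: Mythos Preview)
Your proposal is correct and follows the same overall strategy as the paper: lower-bound $\lambda(H_G)$ in each of the three cases and plug into Theorem~\ref{thm:expanding-main}. The details differ mildly in two places. For free games, the paper reduces to the uniform case by discretization and then reads off $\lambda=1/k$ from the explicit Cayley-graph spectrum of $(\Z/d\Z)^k$; your Glauber-dynamics / Efron--Stein decomposition gives the same $1/k$ bound more directly for arbitrary product measures. For anchored games, the paper also uses Sinclair's canonical-paths/congestion bound, but with a slightly sharper routing: for a pair $(x,y)$ it flips only the coordinates in $\Delta(x,y)$ to $\bot$ and back, rather than routing every pair through the fully anchored tuple $(\bot,\ldots,\bot)$ as you do. Your paths concentrate all the load on the edges incident to $(\bot,\ldots,\bot)$, and a careful congestion count there gives $\zeta = O(k^2/\alpha^k)$ rather than the paper's $O(k/\alpha^k)$, costing an extra $k^2$ in the final exponent; the fix is exactly the paper's refinement. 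For connected games the paper simply asserts $\lambda(H)\ge\rho_{min}$ in one line; your Cheeger-based justification and the $\rho_{min}\ge (k|S|^2)^{-1}$ computation match the paper's treatment.
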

\noindent The proof of Corollary~\ref{cor} can be found in Appendix~\ref{sec:cor}.

Observe that our proof of exponential decay for games whose corresponding $(k-1)$-connection graph is connected
proves a rate of exponential decay that is dependent on the size of the the base game $G$. It is conceivable that this rate of decay can be further improved to depend only on the alphabet size $|\A|$ of the base game and be independent of the size of the base game (as is the case in Raz's theorem for 2 player games). For games whose corresponding $(k-1)$-connection graph is expanding (as is the case with free games and anchoring games), we obtain a rate of exponential decay which is a function of only the base game's alphabet size.

\begin{remark}
For simplicity, we state Theorem~\ref{thm:expanding-main} assuming the base game has a connected $(k-1)$-connection graph. It is easy to check (from the proof of Theorem~\ref{thm:expanding-main}) that it also extends to games that are disjoint union of games each of which has a connected $(k-1)$-connection graph. By disjoint union we mean that each question occurs only in one of the components. For $k=2$, this captures all possible games since every game is a union of disjoint games whose $(k-1=2-1=1)$-connection graphs are connected). We note that for 2-player games, there are alternate proof techniques~\cite{Raz1998,Holenstein2009} using correlated sampling which prove even better rate of exponential decay (our proof of Theorem~\ref{thm:expanding-main} does not use correlated sampling).
In contrast, for $k>2$, there are many games that are not captured by our theorem. We will see below an example of such a $k=3$-player game called the GHZ game.
\end{remark}

%


\paragraph*{A comment about fortified games.} Bavarian, Vidick and Yuen also proved a parallel repetition bound for a special class of multiplayer games \emph{fortified games}~\cite{BavarianVY2016} (a class of games introduced by Moshkovitz~\cite{Moshkovitz2014}). However, we do not consider this a ``true'' exponential-decay parallel repetition bound, because it does not establish a decay bound of the form $\val(G^{\otimes n}) \leq \exp(-\beta n)$ for some constant $\beta$ that depends on the game $G$, but is independent of $n$. Instead, it proves a decay bound that is exponential only for a small number of repetitions (depending on the base game). After this small number of repetitions, there are no guarantees about any further value decay (other than that promised by Verbitsky's theorem). Because we are interested in the asymptotic behavior of an $n$-repeated multiplayer game as $n$ goes to infinity, we do not consider the parallel repetition of fortified games here.

\paragraph*{A disconnected three-player game.} It may seem that, given Corollary~\ref{cor}, we have established a general exponential-decay parallel repetition bound for \emph{all} multiplayer games, albeit with some slightly annoying dependency on a quantity related to the minimum probability of any question from $\mu$. Unfortunately, this is far from the case.

Here is a simple three-player game called the \emph{GHZ game} whose parallel repetition resists analysis; the best decay bound we have comes from Verbitsky's theorem~\cite{Verbitsky1996}. The GHZ game is a three-player game\footnote{The GHZ game comes from the study of non-locality in quantum physics; when the players use classical strategies, their maximum success probability is $\val(G) = 3/4$, but using quantum entanglement, the GHZ can be won with certainty~\cite{GreenbergerHSZ1990}.} where the referee samples a question triple $(x,y,z)$ uniformly at random from $\{ (1,0,0), (0,1,0), (0,0,1), (1,1,1) \}$, and sends each bit of the triple to the corresponding player. The players respond with bits $a,b,c$ respectively, and they win iff $x \wedge y \wedge z = a \oplus b \oplus c$. It is easy to see that $\val(GHZ) = 3/4$ (achieved by the strategy where all players always output ``$0$''). However, the best general bound we have on $\val(GHZ^{\otimes n})$ is the weak inverse-Ackermann decay given by Verbitsky's theorem.

Our main theorem does not apply because the $(k-1)$-connection graph $H_{GHZ}$ of the GHZ game is actually \emph{disconnected}; no two question triples are connected via a single coordinate change. One necessary criterion for the $(k-1)$-connection graph to be connected is that, after fixing any subset of $(k-1)$ players' questions, the remaining player's question is yet undetermined. On the other hand, the players' questions in the GHZ game satisfy a linear relation (i.e. $x \oplus y \oplus z = 1$), and thus fixing two players' questions also fixes the third.

We believe that the strong correlations present in the GHZ question distribution represent the ``hardest instance'' of the multiplayer parallel repetition problem. Existing techniques from the two-player case (which we leverage in this paper) appear to be incapable of analyzing games with question distributions with such strong correlations. Thus we explicitly raise the open question of proving an exponential-decay parallel repetition bound for the GHZ game:

\begin{conjecture}[GHZ parallel repetition]
	There exists a constant $\beta > 0$ such that for all $n$, $$\val(GHZ^{\otimes n}) \leq \exp(-\beta n).$$
\end{conjecture}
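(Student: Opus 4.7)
The plan is to adapt the information-theoretic framework of Raz and Holenstein, but to circumvent the disconnectedness of $H_{GHZ}$ by exploiting the $\F_2$-linear algebraic structure of the GHZ question distribution. The key reparametrization is that $\mu$ can be sampled by drawing $x, y \in \{0,1\}$ independently and uniformly and setting $z = 1 \oplus x \oplus y$. Under this view, the winning predicate simplifies coordinate-wise to $a_i \oplus b_i \oplus c_i = x_i \wedge y_i$, so the parallel-repeated game becomes: the three players must jointly compute the coordinate-wise ANDs of $x_{[n]}$ and $y_{[n]}$, given that player $C$ only sees the per-coordinate parity of those two bits. This reformulation makes $A$ and $B$ look ``almost free'' relative to each other, and isolates the correlation structure entirely into $C$'s view.

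Assuming a strategy for $GHZ^{\otimes n}$ that succeeds with probability at least $2^{-\beta n}$, I would condition on the winning event $W$ and hunt for a single coordinate $i$ such that (i) conditioning on $W$ barely perturbs the marginal on $(X_i, Y_i, Z_i)$, and (ii) the conditional distribution over the remaining coordinates is nearly a product across the three players' views. A standard averaging plus chain-rule argument in the spirit of Holenstein would isolate such a coordinate after discarding $O(\beta n)$ of them. The goal would then be for the three players to locally simulate the other coordinates using only their own questions, produce an effective single-coordinate strategy, and beat $\val(GHZ) = 3/4$, contradicting the hypothesized success probability.

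The main obstacle lies exactly at the local-simulation step, and it is the reason \Cref{thm:expanding-main} does not apply. For $(k-1)$-connected games, each player can resample the other coordinates' questions from their own marginal and stitch these samples to the shared conditional distribution using the connectivity of $H_G$; the total variation cost is absorbed into a chain-rule information bound. For GHZ, fixing one player's question forces the other two onto a one-dimensional affine subspace, so no single player can produce samples from that subspace that are jointly consistent across the other two players \emph{and} match the $W$-conditioned distribution. A natural workaround is to treat $\{A,B\}$ as a single ``super-player'' who shares randomness, sampling $(X_{-i}, Y_{-i})$ jointly and fixing $Z_{-i}$ by the XOR relation, while player $C$ separately resamples $(X_{-i}, Y_{-i})$ conditioned on his own $Z_{-i}$. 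Holenstein's correlated-sampling lemma applied as a two-party exchange between $\{A,B\}$ and $C$ would then control this resampling.

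The hardest step, and the reason the statement remains a conjecture rather than a theorem, is bounding the information cost of this three-party correlated sampling: the usual chain-rule decomposition of $I(X_{[n]} Y_{[n]} Z_{[n]} ; W)$ loses a factor that scales with the number of linear constraints binding the joint distribution rather than with the per-coordinate entropy deficit, giving at best an inverse-polynomial rate of decay. Breaking this barrier likely requires a genuinely new tool beyond the two-player toolkit -- plausible candidates include a Fourier-analytic decomposition of the strategies adapted to the $\F_2$-linear question structure, a hypercontractive estimate on the cube $\{0,1\}^{2n}$ indexed by the first two players' questions, or an entropy-compression argument that directly exploits the XOR relation to collapse the three-player information balance to an effectively two-player one.
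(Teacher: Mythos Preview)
The statement you are attempting is explicitly labeled a \emph{conjecture} in the paper; the authors do not prove it and indeed raise it as an open problem, writing that existing techniques ``appear to be incapable of analyzing games with question distributions with such strong correlations.'' There is therefore no proof in the paper to compare your proposal against.

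Your write-up is not a proof either, and to your credit you say so: you identify the local-simulation step as the obstruction and concede in the final paragraph that the chain-rule bound you would need does not follow from the two-player toolkit. That diagnosis is consistent with the paper's own account of why Theorem~\ref{thm:expanding-main} fails for GHZ --- the $(k-1)$-connection graph $H_{GHZ}$ is disconnected precisely because the XOR constraint $x\oplus y\oplus z = 1$ makes any two players' questions determine the third, so no player can resample the dependency-breaking variable from a marginal that is simultaneously consistent with the others. Your ``super-player $\{A,B\}$ versus $C$'' idea is a natural first move, but as you note it collapses to a two-party correlated-sampling problem whose information cost you cannot control; this is exactly the barrier the paper points to, not a way around it.

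In short: there is a genuine gap, you have correctly located it, and neither you nor the paper closes it. What you have written is a reasonable research outline, not a proof, and should not be presented as one.
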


Finally, we remark that this challenge of handling strongly correlated question distributions is reminiscent of the challenge of  proving \emph{direct sum} theorems for multiparty communication complexity in the \emph{Number-on-Forehead} (NOF) model. There, each player sees every players' inputs but their own, so fixing $(k-1)$ out of $k$ players' inputs will fix the remaining player's inputs. Proving direct sum results in NOF communication complexity has resisted progress for reasons that appear to be related to the multiplayer parallel repetition problem.


\section{Proof of Theorem~\ref{thm:expanding-main}}

\subsection{Proof outline}

We first give a brief overview of the information-theoretic approach to proving two-player parallel repetition as in~\cite{Raz1998,Holenstein2009}, and explain the technical barrier to extending the proof to three or players. We then will describe how we circumvent this technical barrier.

Essentially all known proofs of parallel repetition proceed via reduction, showing how a ``too good'' strategy  for the repeated game $G^n$ can be ``rounded'' into a strategy for $G$ with success probability strictly greater than $\val(G)$, yielding a contradiction. 

Let $\mathscr{S}^n$ be a strategy for $G^n$ that has a high success probability. Either by induction or via a probabilistic argument one can identify a set of coordinates $S$ and an index $i$ such that \\ $\Pr(\text{Players win round $i$} | W_S) > \val(G) + \delta$, where $W$ is the event that the players' answers satisfy the predicate $V$ in all instances of $G$ indexed by $S$. Given a pair of questions $(x,y)$ in $G$ the strategy $\mathscr{S}$ embeds them in the $i$-th coordinate of a $n$-tuple of questions
$$x_{[n]} y_{[n]} = \binom{x_1, x_2, \ldots, x_{i-1}, \qquad x \qquad , x_{i+1}, \ldots, x_n}{y_1, y_2, \ldots, y_{i-1}, \qquad y \qquad , y_{i+1}, \ldots, y_n}$$
that is distributed according to $\P_{X_{[n]} Y_{[n]} | X_i = x, Y_i = y,W}$. The players then simulate $\mathscr{S}^n$ on  $x_{[n]}$ and $y_{[n]}$ respectively to obtain answers $(a_1,\ldots,a_n)$ and $(b_1,\ldots,b_n)$, and return $(a_i,b_i)$ as their answers in $G$. The single-shot strategy $\mathscr{S}$ succeeds with probability precisely $\Pr(\text{Win $i$} | W_S)$ in $G$, yielding the desired contradiction.

As $\mathscr{S}^n$ need not be a product strategy, conditioning on $W_S$ may introduce correlations that make $\P_{X_{[n]} Y_{[n]} | X_i = x, Y_i = y,W_S}$ impossible to sample exactly. A key insight in Raz' proof of parallel repetition is that it is still possible for the players to \emph{approximately} sample from this distribution.
For this, we introduce a \emph{dependency-breaking variable} $R$ with the following properties:
\begin{enumerate}
\item[(a)] Given $r\sim \P_R$ the players can locally sample $x_{[n]}$ and $y_{[n]}$ according to \\ $\P_{X_{[n]} Y_{[n]} | X_i = x, Y_i = y, W_S}$,
\item[(b)] The players can jointly sample from $\P_R$ using shared randomness.
\end{enumerate}

In~\cite{Holenstein2009} $R$ is defined so that a sample $r$ fixes at least one of $\{x_{i'}, y_{i'}\}$ for each $i' \neq i$. It can then be shown that conditioned on $x$, $R$ is nearly (though not exactly) independent of $y$, and vice-versa. In other words,

\begin{equation}
\label{eq:intro_cor_samp}
	\P_{R | X_i = x, W_S} \approx \P_{R | X_i = x, Y_i = y, W_S} \approx \P_{R | Y_i = y, W_S}
\end{equation}
where ``$\approx$'' denotes closeness in statistical distance. Eq.~\eqref{eq:intro_cor_samp}  suffices to guarantee that the players can \emph{approximately} sample the same $r$ from $\P_{R | X_i = x, Y_i = y, W_S}$ with high probability, achieving point (b) above. This sampling is accomplished through a technique called \emph{correlated sampling}.

This argument relies heavily on the assumption that there are only two players who employ a deterministic strategy. With more than two players, it is not known how to design an appropriate dependency-breaking variable $R$ that satisfies \emph{both} items (a) and (b) above: in order to be jointly sampleable, $R$ needs to fix as few inputs as possible; in particular, no single player should require knowledge of the other player's questions to sample $R$. On the other hand, in order to allow players to locally sample their inputs conditioned on $R$, the variable needs to fix as many inputs as possible. These two requirements turn out to be in direct conflict as soon as there are more than two players, and a straightforward generalization of the two-player version of the dependency-breaking variable cannot be ``correlatedly sampled'' by all players, unless every player has  knowledge of the question received by some other player.

\medskip

We avoid this roadblock by proving in Section~\ref{sec:avoid-correlated} that if the $(k-1)$-connected graph $H_G$ is connected, then the players can avoid correlated sampling altogether. In fact, they can sample an appropriate dependency-breaking variable from a \emph{global} distribution that does not depend on any player's question.


\subsection{Following Raz-Holenstein}

Fix a $k$-player game $G = (\mu,V)$, with answer alphabet $\A = \A_1 \times \cdots \times \A_k$ and $\val(G) = 1 - \eps$. Consider the $n$-fold parallel repetition $G^{\otimes n}$ and consider an optimal strategy $\{ f^t : (\X^t)^{\otimes n} \to (\A^t)^{\otimes n}\}_{t\in [k]}$ for the $k$ players.

\smallskip
For $i \in [n]$, let $W_i$ denote the event that the players win coordinate $i$ using this optimal strategy. Let $W = W_1 \wedge \cdots \wedge W_n$ denote the event that the players win all coordinates. For a set $S \subseteq [n]$, let $W_S= \wedge_{i \in S} W_i$. In the following, all probabilities are with respect to this optimal strategy.

\medskip

\begin{proposition}
\label{prop:subset}
	Let $\eps > 0$. Suppose that $\log 1/\Pr(W) \leq \eps n/16 - \log 4/\eps$. Then there exists a set $S \subseteq [n]$ of size at most $t = \frac{8}{\eps} \left( \log 4/\eps + \log 1/\P(W) \right)$ such that
	$$
		\P_{i \notin S} (\neg W_i | W_S) \leq \eps/2
	$$
	where $i$ is chosen uniformly from $[n] - S$.
\end{proposition}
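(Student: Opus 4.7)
The plan is a greedy construction of $S$. Initialize $S_0 = \emptyset$. At stage $j$, check whether $\Ex_{i \notin S_j} \P(\neg W_i \mid W_{S_j}) \le \eps/2$; if so, halt and output $S = S_j$. Otherwise the average exceeds $\eps/2$, so some $i^\star \in [n]\setminus S_j$ satisfies $\P(\neg W_{i^\star}\mid W_{S_j}) > \eps/2$, and we set $S_{j+1} = S_j \cup \{i^\star\}$. This step is well-defined so long as $[n]\setminus S_j \ne \emptyset$, which I verify at the end.

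The key estimate is a multiplicative decrease. By the chain rule applied to $W_{S_{j+1}} = W_{S_j} \cap W_{i^\star}$,
\[
\P(W_{S_{j+1}}) \;=\; \P(W_{S_j})\cdot \P(W_{i^\star}\mid W_{S_j}) \;<\; (1-\eps/2)\,\P(W_{S_j}),
\]
so inductively $\P(W_{S_j}) < (1-\eps/2)^j$. Since $W \Rightarrow W_{S_j}$ we also have $\P(W_{S_j}) \ge \P(W)$, whence $j < \log(1/\P(W))/\log(1/(1-\eps/2)) \le 2\log(1/\P(W))/\eps$. This is comfortably at most the claimed $t = \tfrac{8}{\eps}\!\left(\log(4/\eps) + \log(1/\P(W))\right)$, so the loop halts within $t$ stages, giving $|S| \le t$.

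Finally, to justify that the greedy step always has a coordinate to pick, use the hypothesis $\log(1/\P(W)) \le \eps n/16 - \log(4/\eps)$: substituting into $t$ gives $t \le \tfrac{8}{\eps}\cdot\tfrac{\eps n}{16} = n/2 < n$, so $[n]\setminus S_j$ is non-empty at every stage we enter. I do not expect any substantive obstacle in this argument: it is the standard greedy / chain-rule reduction that opens essentially every parallel repetition proof, and the constants in $t$ are loose enough to absorb the slack in the elementary inequality $-\log(1-\eps/2) \ge \eps/2$. The real work of proving Theorem~\ref{thm:expanding-main} comes afterwards, where one must round a near-optimal $G^{\otimes n}$ strategy to a $G$ strategy using the expansion of $H_G$ to avoid correlated sampling.
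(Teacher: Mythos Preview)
Your argument is correct and complete; the greedy chain-rule construction works exactly as you describe, and your verification that $t \le n/2$ under the hypothesis is accurate.

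However, the paper proves the proposition by a genuinely different route. Rather than building $S$ greedily, it fixes $\delta = \eps/8$, draws $S$ as a random multiset of $t$ independent indices, and shows that $\Ex_S \P_{i \notin S}(\neg W_i \mid W_S) \le \eps/2$ by splitting according to the threshold event $W_{>1-\delta}$ of winning more than $(1-\delta)n$ rounds. Conditioned on $W_{>1-\delta}$, at most $\delta n$ coordinates are lost, so a random $i \notin S$ is lost with probability at most $\delta n/(n-t) \le \eps/4$; on the complementary event, the key estimate is $\Ex_S \P(\neg W_{>1-\delta}\mid W_S) \le (1-\delta)^t/\P(W) \le \eps/4$, which is where the exact form of $t$ enters. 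Your greedy argument is more elementary, avoids the auxiliary threshold event, and in fact yields the sharper bound $|S| < (2/\eps)\log(1/\P(W))$, so the stated $t$ is slack for your method. The paper's averaging approach, on the other hand, establishes something slightly stronger---that a \emph{random} $S$ works in expectation---which can be convenient in settings where one wants $S$ to be oblivious rather than adaptively chosen.
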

\begin{proof}
	Set $\delta = \eps/8$. Let $W_{> 1 - \delta}$ denote the event that the players won more than $(1 - \delta)n$ rounds. To show existence of such a set $S$, we will show that $\Ex_S \P(\neg W_i | W_S) \leq \eps/2$, where $S$ is a (multi)set of $t$ independently chosen indices in $[n]$. This implies that there exists a particular set $S$ such that $\P(\neg W_i | W_S) \leq \eps/2$, which concludes the claim.
	
	First we write, for a fixed $S$,
	\begin{align*}
		\P ( \neg W_i | W_S) &= \Pr(\neg W_i | W_S, W_{> 1 - \delta}) \P(W_{> 1 - \delta} | W_S)  \\
		&\qquad \qquad + \P(\neg W_i | W_S, \neg W_{> 1 - \delta}) \P(\neg W_{> 1 - \delta} | W_S).
	\end{align*}
	Observe that $\P(\neg W_i | W_S \wedge W_{> 1 - \delta})$ is the probability that, conditioned on winning all rounds in $S$, the randomly selected coordinate $i \in [n] - S$ happens to be one of the (at most) $\delta n$ lost rounds. This is at most $\delta n/(n - t) \leq \eps/4$. Now observe that
	\begin{align*}
		\Ex_S \P(\neg W_{> 1 - \delta} | W_S) &\leq \Ex_S \frac{\P(W_S| \neg W_{> 1 - \delta})}{\P(W_S)} \leq \frac{1}{\P(W)} (1 - \delta)^t \leq \eps/4
	\end{align*}
	where for the second inequality we used the fact that $\P(W_S) \geq \P(W)$.	
\end{proof}

For the remainder of this proof we will fix a set $S$ as given by Proposition~\ref{prop:subset}. By renaming coordinates, we will assume without loss of generality that $S$ is the last $t$ coordinates of $[n]$. We will let $m = n - |S|$. We will refer to the games indexed by set $S$ as the $S$-games.

\subsection{Dependency-breaking variables}

We define the $k$-player analogue of the dependency-breaking variable $R$ that is used so crucially in information-theoretic proofs of parallel repetition~\cite{Raz1998,Holenstein2009,BravermanG2015}. $R$ will consist of a variable $\Omega$, which fixes the questions for the $S$-games, and at least $(k-1)$-of-$k$ questions in every other coordinate, and a variable $Z = (A_S)$, which fixes the answers of $S$-games. More formally, $\Omega = (\Omega_1,\ldots,\Omega_m,X_S)$, where $X_S$ are fixed questions for the $S$-games. Each $\Omega_i = (D_i,M_i)$, for $i\in \Sbar$, where $D_i$ is a uniformly random value in $[k]$, and
$$
	M_i = X_i^{\mt} \quad \mbox{ if } D_i = t 	
$$
In other words, $D_i$ specifies which player's question to omit; the other $(k-1)$ players are fixed.

For $i \notin S$, we let $\Omega_\mi$ denote $\Omega$ with $\Omega_i$ omitted. We let $R_\mi := (\Omega_\mi,A_S)$. $R_i$ will refer to $\Omega_i$. We will use lowercase letters to denote instantiations of these random variables: e.g., $r_\mi$, $x^t_i$ refer to specific values of $R_\mi$, $X^t_i$ respectively.

\begin{claim}
\label{clm:ind}
	Conditioned on $R$, $\{\Xvec^t\}_{t\in [k]}$ are independent.
\end{claim}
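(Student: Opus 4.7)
The plan is to split $R$ as $(\Omega, A_S)$ with $\Omega = (D_{\bar S}, M_{\bar S}, X_S)$ and argue independence in two stages: first conditional on $\Omega$ alone, then adding back the conditioning on $A_S$.

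For the first stage, fix $\Omega = (d, m, x_S)$. At every $i \in \bar S$ the values $X_i^{-d_i}$ are pinned to $m_i$, so the only remaining randomness in coordinate $i$ is $X_i^{d_i}$. Because base-game questions are drawn i.i.d.\ across coordinates in $\mu^{\otimes n}$ and $D$ is independent randomness, conditional on $\Omega$ the collection $\{X_i^{d_i} : i \in \bar S\}$ is mutually independent with marginals $\P_{X^{d_i}\mid X^{-d_i}=m_i}$. Now decompose the vector $X_{[n]}^t$ across coordinates: at $i \in S$ it is read off from $x_S$; at $i \in \bar S$ with $d_i \neq t$ it is read off from $m_i$; at $i \in \bar S$ with $d_i = t$ it is the ``free'' variable $X_i^{d_i}$. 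The sets $\{i\in\bar S : d_i = t\}$ for different $t$ are \emph{disjoint}, so the free coordinates used to define $X_{[n]}^t$ for different $t$ come from disjoint, hence independent, blocks of coordinate samples. This gives mutual independence of $\{X_{[n]}^t\}_{t\in[k]}$ conditional on $\Omega$.

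For the second stage, I use that each player's strategy is a deterministic function of that player's questions alone, so $A_S^t = f^t(X_{[n]}^t)\big|_S$ depends only on $X_{[n]}^t$. A standard fact says that if $Y_1,\dots,Y_k$ are mutually independent and $g_t$ is a function of $Y_t$, then further conditioning on $(g_1(Y_1),\dots,g_k(Y_k)) = (a_1,\dots,a_k)$ preserves mutual independence, since
\[
\P(Y_1=y_1,\dots,Y_k=y_k \mid g_1(Y_1)=a_1,\dots,g_k(Y_k)=a_k) \;=\; \prod_t \P(Y_t=y_t \mid g_t(Y_t)=a_t).
\]
Applying this with $Y_t = X_{[n]}^t$ and $g_t = f^t(\cdot)|_S$, conditioning further on $A_S$ does not introduce cross-player correlations, so conditional on $R = (\Omega, A_S)$ the $\{X_{[n]}^t\}_t$ remain mutually independent.

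The one point that requires care is the second stage: one might worry that conditioning on $A_S$ couples the players, since a small value of $\Pr(W_S)$ typically means exactly such couplings are being induced. The key is that this coupling can only be felt \emph{through} $A_S$ as a tuple of separate per-player functions, so once independence of $\{X_{[n]}^t\}$ conditional on $\Omega$ is in hand, the product-of-deterministic-strategies structure makes the extra conditioning factor cleanly across $t$. Everything else is bookkeeping.
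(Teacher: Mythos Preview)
Your argument is correct. The paper states Claim~\ref{clm:ind} without proof, treating it as a routine fact, and the two-stage argument you give is exactly the standard justification: conditioning on $\Omega$ leaves, in each coordinate $i\in\bar S$, a single free question $X_i^{d_i}$, and the index sets $\{i:d_i=t\}$ partition $\bar S$ across players; then the per-player product structure of $A_S=(f^1(X_{[n]}^1)|_S,\ldots,f^k(X_{[n]}^k)|_S)$ ensures that the additional conditioning factors over $t$. There is nothing to compare against in the paper, and nothing missing in your write-up.
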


In the following, $\P_I$ denotes the distribution of a uniformly random $i \in [m]$, and ``$\P \approx_\delta \Q$'' indicates that the probability distributions $\P$ and $\Q$ are $\delta$-close in statistical distance. We will fix
$$
	\delta = \frac{1}{m} \left ( \log \frac{1}{\P(W_S)} + |S| \log |\A|\right).
$$
The next lemma states that for an average $i$, if we sample questions $x_i,\xhat_i$ from the joint probability distribution $\rho(x_i,\xhat_i)$, the distributions of the corresponding dependency-breaking variables will be close.
\begin{lemma}
\label{lem:r}
	\begin{align*}
	\frac{1}{m} \sum_i \, \sum_{x_i,\xhat_i \in \X} \rho(x_i,\xhat_i) \left \| \P_{R_\mi | x_i, W_S} - \P_{R_\mi | \xhat_i, W_S} \right \|_1 \leq O(\sqrt{\delta})
	\end{align*}
	where $\rho(\cdot,\cdot)$ is the weight function of the $(k-1)$-connection graph $H_G$.
\end{lemma}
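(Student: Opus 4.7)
My plan is to reduce the TV-distance bound on the left-hand side to a conditional mutual information bound via Pinsker's and Jensen's inequalities, and then control the resulting information budget via the standard Raz--Holenstein chain rule.

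I would first unpack $\rho$. Because the diagonal $x_i = \xhat_i$ contributes zero to the inner TV distance, the sum splits as a $\tfrac{1}{k}\sum_t$ of ``single-coordinate-resampling'' expectations:
\[
\sum_{x_i,\xhat_i} \rho(x_i,\xhat_i)\,\|\P_{R_{\mi}|x_i,W_S}-\P_{R_{\mi}|\xhat_i,W_S}\|_1 \;=\; \frac{1}{k}\sum_t \Ex_{X_i^{\mt}\sim \P_{X^{\mt}}}\;\Ex_{X_i^t,\what{X}_i^t\overset{\mathrm{iid}}{\sim}\P_{X^t|X_i^{\mt}}}\|\cdot\|_1.
\]
For each $(t,X_i^{\mt})$, I would introduce the midpoint $\pi_{i,t}(r_{\mi}) := \Ex_{X_i^t \sim \P_{X^t|X_i^{\mt}}}\P_{R_{\mi}|X_i, W_S}(r_{\mi})$. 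The triangle inequality bounds the iid expectation by $2\,\Ex_{X_i^t}\|\P_{R_{\mi}|X_i,W_S}-\pi_{i,t}\|_1$; Pinsker followed by Jensen bounds this in turn by $2\sqrt{2\,I_\mu(X_i^t;R_{\mi})}$, where $\mu$ denotes the joint distribution on $(X_i^t,R_{\mi})$ in which $X_i^t\sim\P_{X^t|X_i^{\mt}}$ (the game marginal) while $R_{\mi}\mid X_i$ follows the $W_S$-conditional distribution. Pulling the square root outside the outer $\tfrac{1}{m}\sum_i\cdot\tfrac{1}{k}\sum_t\Ex_{X_i^{\mt}}$ via Jensen, it suffices to show that this total information is $O(\delta)$.

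The remaining task is the chain-rule bound $\tfrac{1}{mk}\sum_{i,t}\Ex_{X_i^{\mt}} I_\mu(X_i^t; R_{\mi}) = O(\delta)$. Under the $W_S$-conditional measure $\bar\P$, the standard Raz--Holenstein calculation applied to $R_{\mi}=(\Omega_{\mi},A_S)$, together with Claim~\ref{clm:ind} (conditional independence of the players' questions given $R$), yields $\sum_i \bar I(X_i; R_{\mi}) \leq \log(1/\P(W_S)) + |S|\log|\A| = m\delta$. A change-of-measure argument combined with summation over $t$ loses at most an additional factor of $k$, absorbed into the constant, giving the claimed $O(\sqrt{\delta})$ bound overall.

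The main obstacle is the change-of-measure step: the $X_i^t$-marginal under $\mu$ is the game distribution $\P_{X^t|X_i^{\mt}}$, not the $W_S$-conditional distribution $\bar\P_{X_i^t|X_i^{\mt}}$ for which the chain rule naturally produces the budget $m\delta$. Navigating this mismatch without recourse to correlated sampling (which the authors explicitly avoid for $k>2$) is the technical heart of the proof, and is where the explicit structure of the dependency-breaking variable $R_{\mi}$ becomes essential.
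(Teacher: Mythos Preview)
Your decomposition is exactly the one the paper uses: drop the diagonal (where the TV distance vanishes), rewrite the $\rho$-weighted sum as $\tfrac{1}{k}\sum_t$ of ``fix $x_i^{\mt}$, resample $x_i^t$ twice'' expectations, and then, for each fixed $t$, invoke the two-player Raz--Holenstein bound treating the other $k-1$ players as a single meta-player. The paper simply \emph{cites} \cite{Holenstein2009,BravermanG2015} for that per-$t$ bound (its equation~\eqref{eq:r1}) and does not re-derive it; your Pinsker-plus-chain-rule sketch is precisely the content of that citation. So up through your second paragraph you are on the same track as the paper, just more explicit about what is inside the black box.

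Your final paragraph, however, misidentifies where the difficulty lies. The change-of-measure from the $W_S$-conditional marginal of $X_i$ to the game marginal $\P_{X_i}$ is a completely standard step in the Raz--Holenstein argument (on average over $i$ the two are $O(\sqrt{\delta})$-close in TV, which is itself part of the same information budget), and it is already handled in the references the paper cites. More importantly, it has nothing to do with correlated sampling: correlated sampling is a technique for players to \emph{jointly sample} from distributions once those distributions are known to be close; it is not used to \emph{prove} the closeness asserted in this lemma. The genuine $k>2$ obstacle the paper is worried about---that no single player knows $x_i^{\mt}$ and hence cannot run correlated sampling on $\P_{R_{\mi}|x_i^{\mt},W_S}$---is addressed in the \emph{next} lemma (Lemma~\ref{lem:global}), via the expansion hypothesis, not here. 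So your proof plan for Lemma~\ref{lem:r} is fine; just delete the last paragraph's framing of the change-of-measure as ``the technical heart'' and its connection to correlated sampling.
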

\begin{proof}
First, we establish the following: for all $t \in [k]$, we have
	\begin{align}
	\Ex_i \, \sum_{x_i^\mt, x_i^t,\xhat_i^t} \P_{X_i^{\mt}}(x_i^{\mt}) \, \P_{X_i^t | x_i^\mt}(x_i^t) \cdot \P_{\what{X}_i^t | x_i^\mt}(\xhat_i^t) \,  \left \| \P_{R_\mi | x_i, W_S} - \P_{R_\mi | \xhat_i, W_S} \right \|_1 \leq O(\sqrt{\delta})
	\label{eq:r1}
	\end{align}
	where we use the shorthand $x_i := x_i^\mt x_i^t$ and $\xhat_i := x_i^\mt \xhat_i^t$. This follows from the same arguments found in~\cite{Holenstein2009,BravermanG2015}; for each player $t$, we can treat the other $(k-1)$ players as one ``meta player'', and apply the two-player analysis to obtain~\eqref{eq:r1}.
	
	Observe that when $x_i^t \neq \xhat_i^t$, we have $$ \P_{X_i^{\mt}}(x_i^{\mt}) \cdot \P_{X_i^t | x_i^\mt}(x_i^t) \cdot \P_{\what{X}_i^t | x_i^\mt}(\xhat_i^t) = k \rho(x_i,\xhat_i).$$ On the other hand, when $x_i^t = \xhat_i^t$, $x_i = \xhat_i$ so therefore $\left \| \P_{R_\mi | x_i, W_S} - \P_{R_\mi | \xhat_i, W_S} \right \|_1 = 0$. Furthermore, for $x_i$ and $\xhat_i$ that differ in more than $1$ coordinate, we have $\rho(x_i,\xhat_i) = 0$, and for every $x_i, \xhat_i$ such that $\rho(x_i, \xhat_i) \neq 0$, there exists a unique $t\in [k]$ such that $x^t_i \neq \xhat_i^t$. Thus we can bound for every $i$:
	\begin{align*}
		&\sum_{x_i,\xhat_i \in \X} \rho(x_i,\xhat_i) \left \| \P_{R_\mi | x_i, W_S} - \P_{R_\mi | \xhat_i, W_S} \right \|_1 \\
		&= \frac{1}{k} \sum_{t\in [k]}  \sum_{x_i^\mt,x_i^t,\xhat_i^t} \P_{X_i^{\mt}}(x_i^{\mt}) \cdot \P_{X_i^t | x_i^\mt}(x_i^t) \cdot \P_{\what{X}_i^t | x_i^\mt}(\xhat_i^t) \,  \left \| \P_{R_\mi | x_i, W_S} - \P_{R_\mi | \xhat_i, W_S} \right \|_1.
	\end{align*}
	Averaging over $i$ and using~\eqref{eq:r1}, we obtain the statement of the lemma.
\end{proof}

\subsection{Avoiding correlated sampling using expansion}\label{sec:avoid-correlated}
At this point, ideally, every player would like to sample from $R_\mi | x_i, W_S$. Lemma~\ref{lem:r} establishes that $R_\mi | x_i, W_S$ is close to $R_\mi | x^{\mt}_i, W_S$ for each $t\in [k]$. None of the players alone has knowledge of $x^{\mt}_i$, however. We will show now that nevertheless, there is a \emph{global} distribution known to all the players, from which the players can approximately sample $R_\mi | x_i, W_S$.

\begin{lemma}\label{lem:global}
	For all $i \in [m]$ there exists a distribution $\wt{\P}_{R_\mi}$ over $R_\mi$ such that
$$
	\frac{1}{m} \sum_i \sum_{x} \rho(x) \| \P_{R_\mi | x, W_S} - \wt{\P}_{R_\mi} \|_1 \leq O \left ( \frac{\delta^{1/4}}{ \sqrt{\lambda}} \right ).
$$
\end{lemma}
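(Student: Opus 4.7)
My plan is to leverage the spectral expansion of $H_G$ together with Lemma~\ref{lem:r}. Lemma~\ref{lem:r} gives an $L_1$ bound on how $\P_{R_\mi | x, W_S}$ varies across neighboring vertices in $H_G$, while the variational characterization of $\lambda(H_G)$ in~\eqref{eq:expander} is naturally stated in terms of $L_2^2$ quantities. The central step is to bridge these two norms via Hellinger-style identities: for any two distributions $P,Q$,
\[
\| \sqrt{P} - \sqrt{Q} \|_2^2 \leq \|P - Q\|_1 \leq 2\|\sqrt{P} - \sqrt{Q}\|_2,
\]
where the first inequality follows from $(\sqrt{a}-\sqrt{b})^2 \leq |a-b|$ and the second from Cauchy--Schwarz applied to the factorization $|P(y) - Q(y)| = |\sqrt{P(y)} - \sqrt{Q(y)}| \cdot (\sqrt{P(y)} + \sqrt{Q(y)})$.

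For each $i$, I will define the vector-valued function $h_i(x) := \sqrt{\P_{R_\mi | x, W_S}}$ (entrywise square root), so that each $h_i(x)$ is a unit vector in $\ell_2$. The first Hellinger inequality combined with Lemma~\ref{lem:r} gives $\frac{1}{m}\sum_i \Delta_i \leq O(\sqrt{\delta})$, where $\Delta_i := \sum_{x,\xhat} \rho(x,\xhat) \| h_i(x) - h_i(\xhat) \|_2^2$. Applying the variational formula~\eqref{eq:expander} to $h_i$ yields $\sum_u \rho(u) \| h_i(u) - \bar{h}_i \|_2^2 \leq \Delta_i/\lambda$, where $\bar{h}_i := \sum_u \rho(u) h_i(u)$. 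Since $\|h_i(u)\|_2 = 1$ for every $u$, the standard variance identity gives $1 - \|\bar{h}_i\|_2^2 = \sum_u \rho(u) \|h_i(u) - \bar{h}_i\|_2^2 \leq \Delta_i/\lambda$, so on average $\bar{h}_i$ sits close to the unit sphere.

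I will take the global distribution to be $\wt{\P}_{R_\mi} := \bar{h}_i^2 / \|\bar{h}_i\|_2^2$, a genuine probability distribution whose entrywise square root equals the unit vector $\bar{h}_i/\|\bar{h}_i\|_2$. By the second Hellinger inequality, the triangle inequality, and the estimate $1 - \|\bar{h}_i\|_2 \leq 1 - \|\bar{h}_i\|_2^2 \leq \Delta_i/\lambda$, one obtains $\| \P_{R_\mi | u, W_S} - \wt{\P}_{R_\mi} \|_1 \leq 2\| h_i(u) - \bar{h}_i \|_2 + 2\Delta_i/\lambda$. Averaging against $\rho(u)$ via Cauchy--Schwarz, then averaging over $i$ and applying Cauchy--Schwarz once more, yields a bound of the form $O(\delta^{1/4}/\sqrt{\lambda}) + O(\sqrt{\delta}/\lambda)$; the first term dominates whenever $\delta \lesssim \lambda^2$, which is precisely the regime in which the lemma is invoked downstream. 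The main technical point is the Hellinger bridge: naively converting $L_2$ to $L_1$ would lose a factor of $\sqrt{N}$, where $N$ is the support size of $R_\mi$ (astronomical in $n$), whereas the square-root transformation keeps everything within an $O(1)$-neighborhood of the unit $\ell_2$-sphere, so that $L_2$ closeness of the amplitudes translates directly into $L_1$ closeness of the underlying distributions.
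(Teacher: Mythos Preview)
Your proposal is correct and follows essentially the same route as the paper: map each conditional distribution to its entrywise square root (a unit $\ell_2$ vector), use the Hellinger--TV inequality to convert Lemma~\ref{lem:r} into an $\ell_2^2$ edge-energy bound, apply the variational characterization~\eqref{eq:expander} of $\lambda$ to bound the variance around the barycenter $\bar h_i$, and then convert back to $\ell_1$. Your treatment of the normalization step is in fact more explicit than the paper's (which simply asserts ``by increasing the error by a constant factor, we can assume $\bar g_i$ is a unit vector''); note also that your residual $O(\sqrt{\delta}/\lambda)$ term equals $\bigl(O(\delta^{1/4}/\sqrt{\lambda})\bigr)^2$ and the left-hand side is trivially at most $2$, so it can always be absorbed into $O(\delta^{1/4}/\sqrt{\lambda})$ without restricting to the regime $\delta \lesssim \lambda^2$.
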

\begin{proof}
For each $i$, define the vector-valued function $g_i: \X \to \R^{R_\mi}$ as follows: for all $x \in \X$,\footnote{Here, when we write $x$, we are implicitly mean $x_i$; we drop the subscript $i$ for notational convenience.}
$$
	g_i(x) = \sqrt{\P_{R_\mi | x,W_S}}
$$
where $\sqrt{\P_{R_\mi | x,W_S}}$ denotes the entry-wise square root of the probability distribution \\ $\P_{R_\mi | x,W_S}$, viewed as a vector. In other words, the entries of $g_i(x)$ are indexed by different values $r_\mi$ of the random variable $R_\mi$. Thus, $g_i$ is a unit vector in the $\ell_2$ norm.

For any $i$ and any $x,\xhat \in \X$, the quantity $\| g_i(x) - g_i(\xhat) \|^2$ is simply the square of the \emph{Hellinger distance} between $\P_{R_\mi | x,W_S}$ and $\P_{R_\mi | \xhat,W_S}$, which can be related to their statistical distance by
$$
\| g_i(x) - g_i(\xhat) \|^2 \leq \| \P_{R_\mi | x,W_S} - \P_{R_\mi | \xhat,W_S} \|_1.
$$
By Lemma~\ref{lem:r}, we can average the above inequality over all $i$ and choosing
$x,\xhat$ according to the probability distribution $\rho(x,x')$, we get
\begin{align*}
	\frac{1}{m} \sum_i \sum_{x,\xhat} \rho(x,\xhat) \| g_i(x) - g_i(\xhat) \|^2   \leq \Ex_i \sum_{x,\xhat} \rho(x,\xhat) \| \P_{R_\mi | x,W_S} - \P_{R_\mi | \xhat,W_S} \|_1 \leq O(\sqrt{\delta})
\end{align*}
But now we can leverage Equation~\eqref{eq:expander}. For every $i$, define the vector $\overline{g}_i = \sum_{x} \P_{X}(x) g_i(x)$. This is not necessarily a unit vector, but we have the relation
\begin{align*}
	\frac{1}{m} \sum_i \sum_{x} \rho(x) \| g_i(x) - \overline{g}_i \|^2 &\leq \frac{1}{\lambda m} \sum_i \sum_{x,\xhat} \rho(x,\xhat) \, \left \| g_i(x) - g_i(\xhat) \right \|^2  \leq O \left ( \frac{\sqrt{\delta}}{\lambda} \right).
\end{align*}
If $O(\sqrt{\delta}/\lambda )$ is small, then this implies that on average, the vectors $g_i(x)$ are all close to a fixed state $\overline{g}_i$. Since $g_i(x)$ are all unit vectors, this implies that $\overline{g}_i$ is close to a unit vector. By increasing the error by a constant factor, we can assume that $\overline{g}_i$ is in fact a unit vector. Thus we can construct the probability distribution
$$
	\wt{\P}_{R_\mi}(r_\mi) = \overline{g}_i(r_\mi)^2.
$$
Using that the statistical distance is at most (up to constant factors) the square root of the Hellinger distance, we get that
$$
	\frac{1}{m} \sum_i \sum_{x} \rho(x) \| \P_{R_\mi | x, W_S} - \wt{\P}_{R_\mi} \|_1 \leq O(\delta^{1/4} \lambda^{-1/2}).
$$
\end{proof}

\subsection{Finishing the proof}
Let $\{ f^t \}$ be an optimal strategy for the game $G^{\otimes n}$. If $\P(W) \leq \frac{4}{\eps} 2^{- \eps n/16}$, then we are done. Otherwise, suppose $\log 1/\P(W) \leq \eps n/16 - \log 4/\eps$. Let the subset $S$ be as given by Proposition~\ref{prop:subset}, and assume the coordinates are numbered so that $S$ is the last $|S|$ coordinates of $[n]$. For all $i \in [m]$, let $\wt{\P}_{R_\mi}$ be as given by Lemma~\ref{lem:global}. Consider the following single-shot strategy by the players, where $x$ is drawn from $\mu$ and $x^t$ is given to player $t$:
\begin{enumerate}
	\item Using shared randomness, the players sample an $i \in [m]$ uniformly at random, and sample $r_\mi$ from $\wt{\P}_{R_\mi}$. Each player $t$ then sets $x_i^t$ to be their ``true'' question $x^t$ they received from the referee.
	\item Using private randomness, each player $t$ samples $x^t_{-i}$ from $\P_{X^t_{-i} | x_i^t, r_\mi}$. That is, each player samples questions for the $n$ coordinates that come from the repeated game, conditioned on their own true input $x_i^t$ and the dependency-breaking variable $r_\mi$.
	\item Player $t$ outputs the $i$'th component of the answer vector $f^t(x^t_{[n]})$.
\end{enumerate}
Lemma~\ref{lem:global} implies that after the first step, the sample $r_\mi$ each player possesses will be, up to statistical error $O(\delta^{1/4}/\sqrt{\lambda })$, distributed according to $\P_{R_\mi | x, W_S}$ (on average over $i$ and $x$). Then, by Claim~\ref{clm:ind}, the joint distribution of the random variables $\{ X_{[n]}^t \}$ that the players have sampled is $$
	\P_{X_{[n]}^1 | x_i^t, r_\mi} \times \cdots \times \P_{X_{[n]}^k | x_i^t, r_\mi} = \P_{X_{[n]} | x_i, r_\mi}.
$$

Thus, conditioned on $r_\mi$ and $x_i$, the distribution of their answers $a_i$ will be distributed according to $\P_{A_i | x_i, r_\mi}$. Averaging over $i$, $x_i$, and $r_\mi$, we get that their answers are $O(\delta^{1/4}/\sqrt{\lambda})$-close to being distributed according to

$$
	\P_I \cdot \P_{X_i} \cdot \P_{R_\mi | X_i, W_S} \cdot \P_{A_i | X_i, R_\mi}
$$
where $\P_I$ stands for the uniform distribution over $i \in [m]$. We also have that, on average $i$, $\P_{X_i | W_S}$ is $O(\sqrt{\delta})$-close in statistical distance to $\P_{X_i}$. Thus their answers are $O(\delta^{1/4}/\sqrt{\lambda}) + O(\sqrt{\delta})$ close to being distributed as
$$
	\P_I \cdot \P_{A_i | W_S}.
$$
Thus by Proposition~\ref{prop:subset}, the probability that the players win $G$ is at least
$$
	1 - \eps/2 - \left ( O(\delta^{1/4}/\sqrt{\lambda}) + O(\sqrt{\delta}) \right).
$$
If $O(\delta^{1/4}/\sqrt{\lambda}) + O(\sqrt{\delta}) < \eps/2$, then we would contradict the fact that $\val(G) = 1 - \eps$. This implies that we must have $\delta = \Omega( \eps^4 \lambda^2 )$. If we let $\P(W) = 2^{-\gamma n}$, then we can write
$$
	\delta \leq \frac{16}{\eps} \left [ \frac{1}{n} \log \frac{4}{\eps} + 2 \log |\A| \gamma \right ]
$$
where we plugged in the bound on $|S| \leq n/2$ from Proposition~\ref{prop:subset}. This implies the lower bound
\begin{align}
	\gamma \geq \Omega\left ( \frac{\eps^5 \lambda^2}{\log |\A|} \right)
\end{align}
when $n \geq \frac{\log 4/\eps}{\eps^5 \lambda^2}$, proving the theorem.


%

{\small
\bibliographystyle{plainurl}
\bibliography{mplayer}
}

\appendix

\section{Proof of Corollary~\ref{cor}}\label{sec:cor}

For each type of game, we compute a lower bound on the second-smallest eigenvalue of the corresponding $(k-1)$-connection graph. Applying Theorem~\ref{thm:expanding-main} then yields the statements of the corollary.

\subsection{Free games}

For simplicity, assume that $\mu(x)$ is the uniform distribution over $[d]^k$, where $d = |\X^1| = \cdots = |\X^k|$.\footnote{Indeed, by letting $d$ be large enough, we can approximate $\mu$ arbitrarily well through discretization and identifying $[d]$ with $\X^t$ for $t= 1,\ldots,k$ in a many-to-one-fashion. Our bounds will not depend on $d$, so $d$ can be taken to be arbitrarily large.} Then the $(k-1)$-connection graph is a weighted version of the $d$-ary, $k$-dimensional hypercube (with self loops). Indeed, the corresponding weight function $\rho$ behaves as follows: for $x,x' \in [d]^k$, we have $\rho(x,x) = d^{-(k+1)}$, and $\rho(x,x') = d^{-(k+1)}/k$ when $x$ and $x'$ differ in exactly one coordinate, and is $0$ otherwise. If we compute the normalized Laplacian $L_H$, we get that
\begin{align*}
	(L_H)_{u,v} = \left \{ \begin{array}{ll}
		1 - \frac{1}{d}	& \mbox{ if } u = v \mbox { and } \rho(v) \neq 0 \\
		- \frac{1}{kd} & \mbox { if } \rho(u),\rho(v) \neq 0 \\
		0 & \mbox{ otherwise }
	\end{array}
	\right.
\end{align*}
This is the normalized Laplacian corresponding to the Cayley graph over the Abelian group $(\Z/d\Z)^k$ with (weighted) generators $\{ g \in (\Z/d\Z)^k : |g| \leq 1 \}$ where $|g|$ is the number of non-zero components of $g$. If $g = (0,0,\ldots,0)$, then its weight is $d^{-1}$, and if $|g| = 1$, then its weight is $(kd)^{-1}$. The spectrum of Cayley graphs is well understood; we have that the smallest non-zero eigenvalue of $L_H$ is therefore $\lambda(H) = \frac{1}{k}$. Thus $H_G$ is a $1/k$-expander.

\subsection{Anchored games}


Here, we prove a lower bound on the second eigenvalue of the $(k-1)$-connection graph of an \emph{anchored} game, and show that it is at least $8k/\alpha^{k}$. Plugging in this bound into Theorem~\ref{thm:expanding-main} gives us
$$ \val(G^{\otimes n}) \leq \exp \left(-\frac{c \, \alpha^{2k} \,\eps^5 \,  n}{64 ~k^2\log |\A|}\right). $$
This asymptotically matches the bounds obtained in ~\cite{BavarianVY2015} in terms of the dependence on $\alpha$ and $k$.

Let us first recall the definition of an anchored game.

\begin{definition}[$\alpha$-anchored games~\cite{BavarianVY2015}] \label{def:anchored-game}
Given a $k$-prover game $G$, and a parameter $\alpha < 1$ we define the \emph{$\alpha$ anchored} game $G_{\bot}$ as follows: the referee chooses a question tuple $(x^1, \ldots, x^k)$, according to $G$, and independently, for every $t\in [k]$, replaces $x^t$ by the anchoring symbol $\bot$ with probability $\alpha$ to get the tuple $(x'^1, \ldots x'^k)$. The new domain is thus $\X'^1 \times \X'^2 \ldots \X'^k$, where $X'^i= \X \cup \{\bot\}$. If any of the $x'$'s are $\bot$, the verifier accepts trivially, otherwise the verifier accepts according to the predicate of the game $G$.
\end{definition}

For convenience, we will denote the $\alpha$-anchored game itself by $G$ in this section, and its $(k-1)$-connection graph by $H_G$. We will show the following lemma.

\begin{lemma}\label{lem:eig-anchor}\footnote{Although the proof of the lemma can be easily seen to show a bound dependent only on $\alpha$ and $k$ for all $\alpha<1$, the anchored game definition in ~\cite{BavarianVY2015} sets $\alpha$ to be a constant $<1/2$. We only state this case, for clarity of exposition and comparison to their result.}~~
$\lambda(H_G)\geq \alpha^{k}/8k$, when $\alpha < 1/2$.
\end{lemma}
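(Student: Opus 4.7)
The plan is to apply the Poincare inequality (canonical paths method) to the reversible random walk on $H_G$ whose stationary distribution is $\rho$, and to route paths through the all-anchor vertex $\vec{\bot} := (\bot, \ldots, \bot)$. The key structural fact enabling this is that the anchoring step is independent across coordinates in $G$, so $\P_{X^t \mid x^{-t}}(\bot) = \alpha$ for every $t$ and every $x^{-t}$. In particular, for any vertex $y$ with $y^t \neq \bot$, the neighbor $y'$ obtained by replacing $y^t$ with $\bot$ is connected to $y$ by an edge of weight $\rho(y, y') = (\alpha/k)\,\rho(y)$, read directly from Definition~\ref{def:oneconnected}. Consequently $\vec{\bot}$ is reachable from every vertex via a length-$k$ ``anchor-one-coordinate-at-a-time'' path whose edges all carry non-trivial weight.

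To control congestion I would use randomized canonical paths. For each ordered pair $(u,v)$, draw a uniformly random permutation $\sigma$ of $[k]$, and let $\gamma_{u,v}^\sigma$ be the path that first anchors the coordinates of $u$ in the order $\sigma(1), \ldots, \sigma(k)$ to arrive at $\vec{\bot}$, then un-anchors to the coordinates of $v$ in reverse $\sigma$-order. Each such path has length at most $2k$. The Poincare inequality then gives $\lambda(H_G) \geq 1/K$ where
$$
K \;=\; \max_e \; \frac{1}{\rho(e)} \sum_{u,v} \rho(u)\rho(v)\, \Ex_\sigma\!\bigl[\, |\gamma_{u,v}^\sigma|\cdot \mathbb{1}[e \in \gamma_{u,v}^\sigma]\,\bigr].
$$

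The main computation is to bound $K$ for a general edge $e = (y, y')$ differing in a single coordinate $t$ with $y'^t = \bot$. Let $A = \{i : y^i = \bot\}$, $a = |A|$, and let $U_e := \{u : u^i = y^i \text{ for all } i \notin A\}$ be the set of vertices agreeing with $y$ on the non-anchored coordinates. Independence of anchoring yields $\rho(U_e) = \alpha^{-a}\,\rho(y)$, and combined with $\rho(e) = (\alpha/k)\,\rho(y)$ this gives $\rho(U_e)/\rho(e) = k/\alpha^{a+1}$. Also, $e$ can appear on the $u$-to-$\vec{\bot}$ half of $\gamma_{u,v}^\sigma$ only if $u \in U_e$, and conditioned on $u \in U_e$ the probability that $\sigma$ forces the path to traverse $e$ equals exactly $1/(k \binom{k-1}{a})$, since $\sigma$ must place the $a$ coordinates of $A$ in its first $a$ slots (in any order) and then $t$ at slot $a+1$. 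A union bound over ``$e$ lies on the $u$-half or the $v$-half'' then gives
$$
K \;\leq\; \max_{0 \leq a \leq k-1}\; \frac{4k}{\alpha^{a+1}\binom{k-1}{a}} \;\leq\; \frac{4k}{\alpha^k},
$$
and hence $\lambda(H_G) \geq \alpha^k/(4k) \geq \alpha^k/(8k)$.

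The step most likely to require care is the evaluation of $\rho(y,y')$ and $\rho(U_e)$ for a general base distribution $\mu$. I would handle this by factoring each quantity as a product of (i) a marginal of $\mu$ on $y$'s non-anchored coordinates, which is common to all three of $\rho(y)$, $\rho(y,y')$, and $\rho(U_e)$, and (ii) the explicit powers of $\alpha$ and $1-\alpha$ arising from the independent anchoring step; the marginal factor then cancels cleanly in the ratios, so the bound depends only on $\alpha$, $k$, and $a$.
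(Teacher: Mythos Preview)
Your approach is essentially the paper's: both use the canonical-paths/congestion method (Diaconis--Stroock/Sinclair), routing every pair through increasingly anchored vertices and bounding the resulting edge congestion. The paper flips only the coordinates on which $x$ and $y$ differ, in a fixed (increasing-index) order and then in reverse; you instead route all the way to $\vec{\bot}$ via a uniformly random permutation. These are minor variants of the same idea, and your computations of $\rho(y,y')=(\alpha/k)\rho(y)$, $\rho(U_e)=\alpha^{-a}\rho(y)$, and $\rho(U_e)/\rho(e)=k/\alpha^{a+1}$ are all correct.

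There is, however, one genuine miscalculation. You assert that, conditioned on $u\in U_e$, the probability over $\sigma$ that the $u$-half of $\gamma_{u,v}^{\sigma}$ traverses $e$ is \emph{exactly} $1/(k\binom{k-1}{a})$, because ``$\sigma$ must place the $a$ coordinates of $A$ in its first $a$ slots and then $t$ at slot $a+1$.'' That reasoning is valid only when $u$ has \emph{no} anchored coordinates. In general, if $B_u:=\{i:u^i=\bot\}$ has size $b$ (necessarily $B_u\subseteq A$), then $e$ is traversed iff in $\sigma$ every element of $A\setminus B_u$ precedes $t$ and $t$ precedes every other element of $\bar A$; this event has probability $\bigl((k-a)\binom{k-b}{a-b}\bigr)^{-1}$, which for $b>0$ is strictly larger than your claimed value. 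Since you need an \emph{upper} bound here to control congestion, the error points the wrong way.

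Fortunately the slack you left absorbs it. Summing the correct probability against $\rho(u)$ over $u\in U_e$ multiplies your expression by the factor
\[
F_a \;=\; \sum_{b=0}^{a}\binom{k}{b}\alpha^{b}(1-\alpha)^{a-b}\;\le\;(1-\alpha)^{a-k},
\]
and for $\alpha<1/2$ one checks directly that
\[
\frac{F_a}{\alpha^{a+1}\binom{k-1}{a}}\;\le\;\frac{1}{(1-\alpha)\,\alpha^{k}}\Bigl(\frac{\alpha}{1-\alpha}\Bigr)^{k-1-a}\frac{1}{\binom{k-1}{a}}\;<\;\frac{2}{\alpha^{k}}
\]
for every $0\le a\le k-1$. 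Thus the corrected congestion is $K\le 8k/\alpha^{k}$, and $\lambda(H_G)\ge \alpha^{k}/(8k)$ as claimed.
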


In order to prove Lemma~\ref{lem:eig-anchor}, we need to make a couple of observations. First, note that the $1$-connection graph $H_G$'s vertices can be partitioned into disjoint sets $V_0, V_1, \ldots, V_k$, where $V_i$ has vertices of all question-tuples with exactly $i$ bottom symbols. Thus, $V_0$ has vertices corresponding to the original question tuples, and $V_k = \{(\bot, \bot, \ldots, \bot )\}$. While $V_0$ has edges between its own vertices (corresponding to edges in the $1$-connection graph of the un-anchored game), all other edges in $H_G$ go between $V_i$ and $V_{i+1}$.

We will lower bound $\lambda(H_G)$ using the notion of \emph{congestion} in the graph. This technique was first introduced by Diaconis and Strook~\cite{DiaconisS1991}, and improved by Sinclair~\cite{Sinclair1992}. The below form can be found in the survey~\cite[Section 4]{Guruswami2016}.

\smallskip
Let us view $H_G$ as an undirected graph\footnote{On the other hand, if viewed as a directed Markov chain, the transition probability $\Pr[ y \mid x] $ for moving from $x$ to $y$  is exactly $\rho(x,y)/\rho(x)$. The stationary distribution on every vertex is $\rho(x)$.}, with weight function $\rho$ on the edges. Since $\rho(x,y) = \rho(y,x)$ by our definition, this is well-defined. A set of canonical paths in $H_G$  is a set $\mathcal{P}$ of simple paths, one  between every ordered pair $(x,y)$ in $H_G$. The \emph{path congestion parameter} of this set of canonical paths is defined as follows:

\[
\zeta(\mP) \; \triangleq \; \max_{e \in E(H_G)}\; \frac{1}{\rho(e)} \sum_{p_{xy}\ni e } \rho(x)\rho(y)|p_{xy}|
\]

Above, $p_{xy}$ denotes the path from $x$ to $y$ in $\mathcal{P}$, and $|p_{xy}|$ is its length. Intuitively, the numerator in the above equation defines the `load' on the edge $(x,y)$, while $\rho(x,y)$ can be interpreted as its capacity. Thus, one would naturally expect that if we could find a set of canonical paths with low congestion parameter, the graph must be expanding in some sense. This is formalized in the following theorem:

\begin{theorem}[~\cite{Sinclair1992}, see also {~\cite[Theorem 4.3]{Guruswami2016}}]\label{thm:cong-lb}
For any set of canonical paths $\mP$,
\[\lambda(H_G) \geq \frac{1}{\zeta(\mP)}\]
\end{theorem}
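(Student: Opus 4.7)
The plan is to invoke the variational characterization of $\lambda(H_G)$ from Equation~\eqref{eq:expander} and then show that for every vector-valued function $g : V(H_G) \to \R^r$, the Dirichlet form (the numerator) is at least $1/\zeta(\mP)$ times the weighted variance (the denominator). This is a direct adaptation of the classical Diaconis--Stroock--Sinclair canonical paths argument to the weighted, reversible setting of $H_G$.

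First I would rewrite the denominator as a pairwise sum via the standard variance identity
$$\sum_{x} \rho(x)\,\|g(x) - \bar g\|^2 \;=\; \tfrac{1}{2} \sum_{x,y} \rho(x)\rho(y)\,\|g(x) - g(y)\|^2,$$
which holds because $\sum_x \rho(x) = 1$ and $\bar g = \sum_x \rho(x)\, g(x)$. Next, for each ordered pair $(x,y)$, I would use the canonical path $p_{xy} \in \mP$ and expand the telescoping difference $g(x) - g(y) = \sum_{e=(a,b) \in p_{xy}} (g(a) - g(b))$. Applying the Cauchy--Schwarz inequality to this sum of $|p_{xy}|$ vectors then yields
$$\|g(x) - g(y)\|^2 \;\leq\; |p_{xy}| \sum_{e=(a,b) \in p_{xy}} \|g(a) - g(b)\|^2.$$

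Substituting this bound into the variance identity and swapping the order of summation to pull the sum over $e$ outside gives
$$\sum_x \rho(x)\,\|g(x) - \bar g\|^2 \;\leq\; \tfrac{1}{2} \sum_{e=(a,b)} \|g(a) - g(b)\|^2 \sum_{(x,y):\, e \in p_{xy}} \rho(x)\rho(y)\,|p_{xy}|.$$
By the definition of $\zeta(\mP)$, the inner sum is at most $\zeta(\mP)\cdot \rho(e)$. Substituting and collapsing the remaining edge sum back to the Dirichlet form $\sum_{x,y}\rho(x,y)\,\|g(x) - g(y)\|^2$, the right-hand side becomes $\zeta(\mP)$ times the numerator in~\eqref{eq:expander} (up to a combinatorial constant). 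Rearranging and taking the infimum over $g$ in~\eqref{eq:expander} then delivers $\lambda(H_G) \geq 1/\zeta(\mP)$.

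The one nontrivial step, and thus the main obstacle, is the careful bookkeeping between directed and undirected edges: the Dirichlet form in~\eqref{eq:expander} sums $\rho(x,y)\,\|g(x) - g(y)\|^2$ over ordered pairs, whereas $\zeta(\mP)$ is defined as a maximum over edges $e \in E(H_G)$ which the paper views as undirected, and each canonical path is itself a sequence of directed edge traversals. Pinning down a consistent orientation convention for the canonical paths is what ensures that the factor of $2$ coming from the variance identity cancels cleanly with the one coming from the symmetry of $\rho$, yielding precisely the stated constant of $1$ in $\lambda(H_G) \geq 1/\zeta(\mP)$.
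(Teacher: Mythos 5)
The paper does not give its own proof of this statement; it is quoted directly from Sinclair~\cite{Sinclair1992} (see also the Guruswami survey), so there is no "paper's proof" to compare against. Your proposal correctly reconstructs the standard Diaconis--Stroock--Sinclair canonical-paths argument found in those references: convert the variance in the denominator of~\eqref{eq:expander} to a pairwise sum, telescope along canonical paths, apply Cauchy--Schwarz, swap the order of summation, and bound the inner sum by $\zeta(\mP)\rho(e)$. The reasoning is sound, and you rightly flag the ordered-versus-undirected bookkeeping as the one delicate point: depending on whether~\eqref{eq:expander} is read as a sum over ordered or unordered pairs (and likewise whether the canonical paths are one per ordered or unordered pair), the chain of inequalities can come out to $\lambda \geq 2/\zeta(\mP)$ or even $\lambda \geq 4/\zeta(\mP)$, which is only stronger than the claimed $1/\zeta(\mP)$. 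So any residual constant-factor mismatch is harmless; the bound as stated holds, and your proof is correct in substance.
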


We will prove Lemma~\ref{lem:eig-anchor}, by choosing a good set of canonical paths in $H_G$.

\begin{proof}[Proof of Lemma~\ref{lem:eig-anchor}]
Consider two vertices $x,y$ in $H_G$. Let $\Delta(x,y) = \{i_1, \ldots, i_s \} \subseteq [k]$ be the set of (player) indices  where the tuples differ, with $i_1 \leq i_2 \leq \ldots i_s$. We will define the canonical path from $x$ to $y$ to be the one obtained by flipping each of $x^{i_1}, \ldots x^{i_s}$ to $\bot$ in order, and then flip these to $y^{i_1}, \ldots y^{i_s}$, but in the reverse order $i_s\rightarrow \ldots  \rightarrow i_1$. Each flip corresponds to moving along an edge in $H_G$. Call the set of these canonical paths $\mP$. The path from $x$ to $y$ in $\mP$ is exactly the reverse of the path from $y$ to $x$.

We will upper bound the congestion through any edge $e = \{u,v\}$ caused by $\mP$. If $u,v \in V_0$, then no path in $\mP$ passes through this edge, and hence the congestion  on $e$ is $0$. Suppose that $u \in V_l$, and $v \in V_{l+1}$ for some $l <k$.

We need to find which vertices $x$ would use a canonical path that passes from $u$ to $v$ to reach another vertex. To identify this set, define $B_v \triangleq \{i\in[k] \;:\; v_i = \bot\}$, and similarly $B_u \triangleq \{i\in[k] \;:\; u_i = \bot\}$. Clearly $|B_v| = l+1, |B_u|=l$, and $B_u \subseteq B_v$. Let us write $u$ as $u=(\bot^l, z_u)$, where the indices are appropriately ordered (with $z_u$ in $\overline{B}_u)$.

For $0\leq r \leq l$, a vertex $w \in V_r$ will be said to be in the $r$-th \emph{shadow} of $u$ (denoted by $S_r(u)$), if: \begin{itemize} \item[(a)] $w|_{\,\overline{B}_u} = z_u$, and \item[(b)] If $B_u= \{j_1, \ldots, j_l$\}, with $j_1 \leq \ldots \leq j_l$, then $w_{j_q} \neq \bot$ for every $q > l-r$ . \end{itemize}

The following Claim is easy to verify:
\smallskip
\begin{claim}
$\rho(S_r(u)) = \Pr_{x\sim \rho}[x^{\,\overline{B}_u}=z_u] \times \,\alpha^r(1-\alpha)^{l-r}$
\end{claim}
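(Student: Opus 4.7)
My plan is to unpack the three quantities involved and collapse the sum by marginalisation. Write $\mu$ for the pre-anchoring question distribution, so that $\rho$ restricted to a single vertex equals the probability of sampling that vertex from the anchored game: draw $x \sim \mu$, then independently replace each $x^t$ by $\bot$ with probability $\alpha$.

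First, for any vertex $w$ with bot-set $I := \{t \in [k] : w^t = \bot\}$, I would derive the explicit formula
$$\rho(w) \;=\; \alpha^{|I|}(1-\alpha)^{k-|I|} \cdot \Pr_{x \sim \mu}\!\left[x|_{\overline{I}} = w|_{\overline{I}}\right],$$
by writing $\rho(w) = \sum_{w'} \rho(w,w')$ and conditioning on the pre-anchored $x$: the $|I|$ anchored coordinates contribute $\alpha^{|I|}$, the $k-|I|$ surviving coordinates contribute $(1-\alpha)^{k-|I|}$, and these survivors must match $w$ on $\overline{I}$.

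Second, I would plug in the structure of $S_r(u)$. By the shadow conditions, each $w \in S_r(u)$ has bot-set equal to a fixed size-$r$ subset $I \subseteq B_u$; the $k-l$ coordinates in $\overline{B}_u$ are frozen to $z_u$, while the $l-r$ coordinates in $B_u \setminus I$ range freely over $\X^{l-r}$. Summing the formula above over $w \in S_r(u)$ therefore marginalises out the free coordinates and yields
$$\rho(S_r(u)) \;=\; \alpha^r(1-\alpha)^{k-r} \cdot \Pr_{x \sim \mu}\!\left[x|_{\overline{B}_u} = z_u\right].$$

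Third, I would convert the $\mu$-probability into the $\rho$-probability that appears in the claim. Under the anchored distribution, the event $\{x|_{\overline{B}_u} = z_u\}$ requires both that the pre-anchored tuple agrees with $z_u$ on $\overline{B}_u$ and that none of those $k-l$ coordinates are anchored, contributing an independent factor $(1-\alpha)^{k-l}$:
$$\Pr_{x \sim \rho}\!\left[x|_{\overline{B}_u} = z_u\right] \;=\; (1-\alpha)^{k-l} \cdot \Pr_{x \sim \mu}\!\left[x|_{\overline{B}_u} = z_u\right].$$
Combining the two displays produces the claimed factor $\alpha^r(1-\alpha)^{l-r}$.

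I do not expect any real obstacle: the claim is a bookkeeping exercise about anchoring probabilities. The only care required is tracking the exponents $k-|I|$, $k-l$, and $l-r$ across the two distributions and verifying that the free-versus-frozen partition of the coordinates of $w \in S_r(u)$ matches the stated shadow conditions.
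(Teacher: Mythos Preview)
Your proposal is correct and follows essentially the same approach as the paper: identify that each $w\in S_r(u)$ has a fixed bot-set of size $r$ inside $B_u$, compute $\rho(w)$ from the anchoring process, marginalise over the $l-r$ free non-bot coordinates in $B_u$, and absorb the factor $(1-\alpha)^{k-l}$ into the conversion from $\Pr_{x\sim\mu}[x|_{\overline{B}_u}=z_u]$ to $\Pr_{x\sim\rho}[x|_{\overline{B}_u}=z_u]$. Your write-up is in fact more explicit about this last conversion than the paper's, which only gestures at it in one sentence.
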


\begin{proof}
Any vertex in $S_r(u)$ can be seen to be generated by the verifier in the following way: pick a random question in the original (un-anchored) game conditioned on $x^{\,\overline{B}_u}=z_u$, then flip $j_1, \ldots j_r$ to $\bot$ (happens with probability $\alpha^r$), and leave the others unflipped (happens with probability $(1-\alpha)^{k-r}$). The probability of not flipping $\overline{B}_u$  (i.e. $(1-\alpha)^{k-l}$) is accounted for in the distribution $\rho$ of the anchored game.  This yields the measure of the set $S_r(u)$ as being the expression given above.
\end{proof}

Any path in $\mP$ that passes through $u$ will necessarily either originate or end in one of its shadows. The length of any canonical path as defined above is at most $2k$. Hence, the load through the edge $(u,v)$ can be upper bounded as follows (denoting $\Pr_{x\sim \rho}[x^{\,\overline{B}_u}=z_u]\,$ by $\,\Pr[z_u]$ for clarity):

\begin{align*}
\sum_{p_{xy}\ni e } \rho(x)\rho(y)|p_{xy}| &\leq 2k \sum_{p_{xy}\ni e } \rho(x)\rho(y)\\
& \leq \, 4k \sum_{r=0}^{l}~ \rho(S_r(x))\\
& = \, 4k \sum_{r=0}^{l}~ \Pr_{x\sim \rho}[x^{\,B_u}=z_u] \times\,\alpha^r(1-\alpha)^{l-r}\\
& =  \, 4k(1-\alpha)^l~\Pr[z_u] \sum_{r=0}^l ~ \left( \frac{\alpha}{1-\alpha}  \right)^r\\
&\leq \, 4k (l+1)(1-\alpha)^l ~ \Pr[z_u] &\ldots \text{ since } \alpha <1/2\\
&\leq\, 8k ~\Pr[z_u]
\end{align*}

The capacity of edge $(u,v)$ is $\rho(u,v)=\Pr[z_u] \times \alpha^{l}$. Thus, the congestion along the edge is bounded by

$$
\zeta(e) \leq \frac{8k ~\Pr[z_u]}{\Pr[z_u] \times \alpha^{l}} = \frac{8k}{\alpha^{l}}
$$

Hence, the maximum congestion is bounded by $\zeta(\mP) \leq \frac{8k}{\alpha^{k}} $, which yields the lower bound $\lambda(H_G) > \frac{\alpha^{k}}{8k}$, by invoking Theorem~\ref{thm:cong-lb}.
\end{proof}

%

\subsection{Connected games}
This follows from the observation that $\lambda(H) \geq \rho_{min}$ when the graph $H$ is connected. The ``in particular'' statement follows from the definition of the weight function $\rho$ of the $(k-1)$-connection graph: $\P_X(x)$ is simply $1/|S|$, and $\P_{X^t | x^\mt}(x'^t)$ is also at least $1/|S|$.

\end{document}